\documentclass[preprint,byrevtex,onecolumn,12pt,pre,bibnotes]{revtex4}%
\usepackage{amsfonts}
\usepackage{amsmath}
\usepackage{amssymb}
\usepackage{graphicx}%
\setcounter{MaxMatrixCols}{30}
\providecommand{\U}[1]{\protect\rule{.1in}{.1in}}
\newtheorem{theorem}{Theorem}

\newenvironment{proof}[1][Proof]{\noindent\textbf{#1.} }{\ \rule{0.5em}{0.5em}}
\begin{document}
\preprint{UATP/1201}
\title{Nonequilibrium Thermodynamics. Symmetric and Unique Formulation of the First
Law, Statistical Definition of Heat and Work, Adiabatic Theorem and the Fate
of the Clausius Inequality: A \ Microscopic View}
\author{P.D. Gujrati}
\email{pdg@uakron.edu}
\affiliation{Department of Physics, Department of Polymer Science, The University of Akron,
Akron, OH 44325}

\begin{abstract}
The status of heat and work in nonequilibrium thermodynamics is quite
confusing and non-unique at present with conflicting interpretations even
after a long history of the first law $dE(t)=$ $d_{\text{e}}Q(t)-dW_{\text{e}%
}(t)$ in terms of exchange heat and work, and is far from settled. Moreover,
the exchange quantities lack certain symmetry (see text). By generalizing the
\emph{traditional} concept to also include their time-dependent irreversible
components $d_{\text{i}}Q(t)$ and $d_{\text{i}}W(t)$ allows us to express the
first law in a symmetric form $dE(t)=$ $dQ(t)-dW(t)$ in which $dQ(t)$ and work
$dW(t)$ appear on equal footing and possess the symmetry.\emph{ }We prove that
$d_{\text{i}}Q(t)\equiv d_{\text{i}}W(t)$; as a consequence, irreversible work
turns into irreversible heat. Statistical analysis in terms of microstate
probabilities $p_{i}(t)$ \emph{uniquely} identifies $dW(t)$ as
\emph{isentropic} and $dQ(t)$ as \emph{isometric} (see text) change in
$dE(t)$, a result known in equilibrium. We show that such a clear separation
does not occur for $d_{\text{e}}Q(t)$ and $dW_{\text{e}}(t)$. Hence, our new
formulation of the first law provides tremendous advantages and results in an
extremely useful formulation of non-equilibrium thermodynamics, as we have
shown recently [Phys. Rev. E \textbf{81}, 051130 (2010); \textit{ibid}
\textbf{85}, 041128 and 041129 (2012)]. We prove that an adiabatic process
does not alter $p_{i}$. \emph{All these results remain valid no matter how far
the system is out of equilibrium. }When the system is in internal equilibrium,
$dQ(t)\equiv T(t)dS(t)$ in terms of the instantaneous temperature $T(t)$ of
the system, which is reminiscent of equilibrium, even though, neither
$d_{\text{e}}Q(t)\equiv T(t)d_{\text{e}}S(t)$ nor $d_{\text{i}}Q(t)\equiv
T(t)d_{\text{i}}S(t)$. Indeed, $d_{\text{i}}Q(t)\ $and $d_{\text{i}}S(t)$ have
very different physics. We express these quantities in terms of $d_{\text{e}%
}p_{i}(t)$ and $d_{\text{i}}p_{i}(t)$, and demonstrate that $p_{i}(t)$ has a
form very different from that in equilibrium.\ \emph{The first and second laws
are no longer independent so that we need only one law, which is again
reminiscent of equilibrium. }The traditional formulas like the Clausius
inequality $\oint d_{\text{e}}Q(t)/T_{0}<0$, $\Delta_{\text{e}}W<-\Delta
\left[  E(t-T_{0}S(t))\right]  $, etc. become equalities $\oint
dQ(t)/T(t)\equiv0,$ $\Delta W=-\Delta\left[  E(t-T(t)S(t)\right]  $, etc, a
quite remarkable but unexpected result in view of the fact that $\Delta
_{\text{i}}S(t)>0.$ We identify the uncompensated transformation $N(t,\tau)$
during a cycle. We determine the irreversible components in two simple cases
to show the usefulness of our approach; here, the traditional formulation is
of no use. Our extension bring about a very strong parallel between
equilibrium and non-equilibrium thermodynamics, except that one has
irreversible entropy generation $d_{\text{i}}S(t)>0$ in the latter.

\end{abstract}
\date{\today}
\maketitle

\section{Introduction and Controversy}

\subsection{Heat and Work}

Gislason and Craig \cite{Gislason} recently remarked that the definition of
work in nonequilibrium "...thermodynamics processes remains a contentious
topic," a rather surprising statement, as the field of thermodynamics is an
old discipline. However, there is some truth to their critique, which was
motivated by an earlier paper by Bertrand \cite{Bertrand}, who revisited the
confusion first noted by Bauman \cite{Bauman} about different formulation of
work [the \emph{exchange heat work }$d_{\text{e}}W(t)=P_{0}dV(t)$\ or the
\emph{generalized work}$\ dW(t)=P(t)dV(t)$, see Refs.
\cite{Prigogine,deGroot,note-0,Gujrati-I,Gujrati-II,Gujrati-Heat-Work,Gujrati-III}
and below] in terms of the external ($P_{0}$) and instantaneous or internal
($P(t)$) pressures \cite{note-1} at a given instant $t$, see Fig.
\ref{Fig_System}, and discussed by many others
\cite{Alicki,Deffner,Kestin,Kievelson,Gislason-TwoSystems,Bizarro,Anacleto-SecondLaw,Anacleto-DissipativeWork,Honig,Jarzynski,Jarzynski-1,Rubi,Rubi-1,Peliti,Peliti-1,Cohen,Sekimoto,Seifert,Nieuwenhuizen,Crooks,Jarzynski0,Ritort}%
\ since then with no consensus \cite[p.181, Vol. 1]{Kestin}. Traditional
formulation of nonequilibrium statistical mechanics and thermodynamics
\cite{Landau,Gibbs} follows a mechanistic approach in which the system of
interest follows its (classical or quantum) mechanical evolution in time.
Being a mechanical concept, work is easier to identify in thermodynamics than
heat; the latter is not possible to be identified with any mechanical force.
Therefore, one usually identifies heat by first identifying work and then
subtracting the latter from the energy change \cite{Landau} in the first law.
This means that different formulations of work will result in different heats.
We refer the reader to the above references for an interesting history of the
confusion. Gislason and Craig \cite{Gislason-TwoSystems} list twenty-six
representative textbooks including \cite{Kirkwood,Zemansky} where the
pressure-volume work and heat are defined so differently that they are not
equivalent in the presence of irreversibility, and there appears to be no
consensus about their right formulation so far; see the recent debate in the
field \cite{Jarzynski,Jarzynski-1,Rubi,Rubi-1,Peliti,Peliti-1,Cohen}.
Unfortunately, none of the latter sources consider \emph{internal variables}
\cite[and references therein]{Gujrati-I,Gujrati-II,Maugin} that are needed to
describe irreversible processes but do not appear explicitly in the
Hamiltonian. They also do not consider \emph{thermodynamic forces}
\cite{Gujrati-I,Gujrati-II,Prigogine} that are non-zero when the system is
away from equilibrium.

The attempt to define heat by first defining work leaves the concept of heat
devoid of clear physical significance, especially since it depends on what we
mean by work. In many cases, work is identified by considering the work
performed by bodies external to the system (the medium; see Fig.
\ref{Fig_System}), which may have nothing to do with the work done by the
system, especially when we consider irreversible processes. This has created a
lot of confusion in the literature when dealing with irreversible processes;
see Refs. \cite{Jarzynski,Jarzynski-1,Rubi,Rubi-1,Peliti,Peliti-1,Cohen} for a
recent discussion; see also \cite{Gujrati-Heat-Work}. The traditional
formulation of statistical mechanics and thermodynamics \cite{Landau,Gibbs}
follows a mechanistic approach in which the system of interest, to be precise
its microstate, follows its (classical or quantum) mechanical evolution in
time. The only difference between mechanics and thermodynamics is that the
evolution in thermodynamics is \emph{always} stochastic, which makes the
evolution irreversible in accordance with the second law; see for example a
recent review \cite{Gujrati-Symmetry}. Therefore, it is crucial to consider a
\emph{statistical foundation} of nonequilibrium thermodynamics for a better
understanding of heat and work, a project we have also initiated recently
\cite{Gujrati-Symmetry,Gujrati-I,Gujrati-II,Gujrati-III,Gujrati-IV}. The
concept of work follows in a trivial manner by taking (a \emph{stochastic
average} of) its mechanical analog (work done by generalized forces in the
Hamiltonian formulation) \cite{Landau}. and was shown to be given by
$dW_{\text{V}}(t)=P(t)dV(t)$ for the pressure work; see Ref. \cite{Gujrati-I}.
These forces can be controlled by an observer and determine the
\emph{observables} in the system; see for example \cite{Gujrati-I,Gujrati-II}.
It was shown in Ref. \cite{Gujrati-I,Gujrati-II} that the first law can
\emph{also} be written in terms of the \emph{generalized heat} $dQ(t)$
\emph{added to} and the \emph{generalized pressure work} $dW(t)$ \emph{done
by} the system:
\begin{equation}
dE(t)=dQ(t)-dW(t), \label{First_Law}%
\end{equation}
where%
\begin{equation}
d_{\text{i}}Q(t)\equiv dQ(t)-d_{\text{e}}Q(t),d_{\text{i}}W(t)\equiv
dW(t)-d_{\text{e}}W(t), \label{Generalized-Heat-Work}%
\end{equation}
denote irriversible heat and work, respectively, generated within the system
\cite{note-0}. However, the interpretation of the above new formulation, its
statistical basis, and its realtionship with the traditional formulation
$dE(t)=d_{\text{e}}Q(t)-d_{\text{e}}W(t)$ in terms of exchange heat
$d_{\text{e}}Q(t)=T_{0}d_{\text{e}}S(t)$ and work $d_{\text{e}}W(t)=P_{0}%
dV(t)$ was not explored in earlier publications
\cite{Gujrati-I,Gujrati-II,Gujrati-Heat-Work,Gujrati-III}.

For an isolated system, the mechanical observables, collectively denoted by
the set $\mathbf{X}$, are additive integrals of motion
\cite{Landau-Mech,Landau} such as the energy $E$, the number of particles
$N_{j}$ of different species, and linear and angular momenta $\mathbf{P}$\ and
$\mathbf{M}$ of the system, respectively, whose values depend on the
preparation of the system. When the motion is confined to a finite region of
space, the volume $V$ of this region also characterizes the system as a
constraint. The \emph{constraints} are also treated as observables.The
internal variables unfortunately cannot be controlled by an observer. Despite
this, the concept of work needs to incorporate the additional work required in
changing the internal variables. This gives rise to some complication in
identifying the form of work associated with them. The collection of
observables and internal variables will be called \emph{state variables} and
will be denoted by the set $\mathbf{Z}$. The macrostate can be characterized
in a larger state space of $\mathbf{Z}$, even though the observed macrostate
is characterized by $\mathbf{X}$.\ The internal variables do not remain
constant for an isolated system as the system relaxes. In this sense, we must
treat the isolated system as interacting with an appropriate "fictitious"
medium to allow for the variation of internal variables, a fact not emphasized
in the literature to the best of our knowledge; see Ref. \cite{Gujrati-II}. As
both energy and work are mechanical concepts, it is trivial to identify their
statistical averages over microstates whether the system is in or out of equilibrium.

\subsection{Controversy}

As the concept of heat has no analog in mechanics, it is usually identified
indirectly as that energy change which is \emph{not} work. \emph{This kind of
approach makes heat and work not unique}: a change in one affects the other;
only their difference has a well-defined meaning. This has created a lot of
confusion in the field. Our goal in this work is to establish that there
exists a consistent procedure in which the partition is \emph{unique} so that
work and heat are no longer arbitrary in irreversible processes. They are
given by generaized work and heat despite the controversy. The first law of
thermodynamics, see for example, Refs. \cite{Prigogine,Landau}, is a very
general statement and is supposed to be valid for all processes, and not just
equilibrium processes. The traditional formulation of the first law in terms
of the \emph{exchange heat and work} is not only oblivious to the internal
variables and thermodynamic forces, whose presence must control the approach
to equilibrium, but most significantly, is also oblivious to the violation of
the second law; see below. This is why we need to use both the first and the
second laws in nonequilibrium thermodynamics. How can a formulation of a
fundamental law of physics allow for the violation of another fundamental law
of physics? There is also an asymmetry between heat and work that will be
elaborated below in that the two terms are not on an equal footing. There is
obviously no problem for reversible processes.

We mostly consider mechanical work including dissipation, but the arguments
are valid for all kinds of work; see Sec. \ref{Sect_Other_State_Variables},
however. Zemansky \cite[p.73]{Zemansky} defines heat as energy exchange "...by
virtue of a temperature difference only." Unfortunately, this rules out any
isothermal heat exchange and cannot be considered general. Kirkwood and
Oppenheim define heat as energy exchange resulting in "...the temperature
increment..." (which rules out phase changes requiring latent heat) and later
note that the work may be converted to heat due to frictional dissipation
\cite[pp. 16,17]{Kirkwood} as was first observed by Count Rumford in 1798
during the boring of cannon \cite{Prigogine}. These are two of the examples
where heat is defined directly without first identifying work; they have
limitations and cannot be considered general, especially when irreversibility
is present. The situation with work is just as confusing as we noted above
with its two different formulations: $d_{\text{e}}W(t)=P_{0}dV(t)$\ or
$dW(t)=P(t)dV(t)$.

Dissipation always gives rise to positive entropy generation due to
irreversibility and also raises the temperature such as due to friction or the
Joule heat in resistors. Therefore, it is natural to account for such viscous
dissipation in work and heat when dealing with nonequilibrium systems, as they
are integral to the system and dictate its relaxation. The fact that
literature is not very clear on how to incorporate viscous dissipation has
motivated this work; see however \cite{Bizarro,Anacleto-DissipativeWork}, but
the authors do not take the discussion far enough to obtain the results
derived here. Recently, dissipative forces are explicitly considered in
stochastic trajectory thermodynamics
\cite{Jarzynski,Jarzynski-1,Rubi,Rubi-1,Peliti,Peliti-1,Cohen,Sekimoto,Seifert,Crooks,Jarzynski0,Ritort}%
. However, the approach differs from our approach \cite{note-2} in important
ways and has also given rise to controversy, and remains contentious
\cite{Jarzynski,Jarzynski-1,Rubi,Rubi-1,Peliti,Peliti-1,Cohen}. In addition,
the approach is not general as it is limited to isothermal variations and to
cases where Langevin dynamics is applicable; see, however, Hoover and Hoover
\cite{Hoover} where an example of a time-reversible deterministic Hamiltonian
system is given.

\subsection{New Results}

We interpret heat and work used in the recent reformulation in Eq.
(\ref{First_Law}) that was proposed in Refs.
\cite{Gujrati-I,Gujrati-II,Gujrati-III} and follow the consequences with an
aim to develop their statistical definition. The formulation makes the first
law identical to the second law (in the guise of the Gibbs fundamental
relation) In general, $d_{\text{e}}Q(t)\equiv T_{0}d_{\text{e}}S(t),$ but
$dS(t)$\ and $dQ(t)$\ are not related to each other, except in internal
equilibrium \cite{Gujrati-I}. Moreover, it would be incorrect to conclude from
$dS(t)\equiv d_{\text{e}}Q(t)/T_{0}+d_{\text{i}}S(t)$ that $d_{\text{i}%
}Q(t)=T_{0}d_{\text{i}}S(t)$; \ see later. The aim of any theory of
nonequilibrium thermodynamics is to determine the entropy change $dS(t)$.
Therefore, the determination of $d_{\text{i}}S(t)$ becomes the focus in any
investigation of a body in the traditional formulation. The second law is
reflected in the inequality $d_{\text{i}}S(t)>0$ for any irreversible process.
The statistical analysis provides an elegant formulation of nonequilibrium
thermodynamics in which there is a \emph{unique} and \emph{natural}
distinction between nonequilibrium work and heat in that the generalized work
represents \emph{isentropic change in the (internal) energy} and the
generalized heat represents the \emph{isometric} (constant extensive state
variables excluding the energy) change in the energy. The latter change\emph{
}results purely from the entropy change. The unique partition of energy
remains valid no matter how far the system is out of equilibrium. The
assumption of \emph{internal equilibrium} allows us to express the first law
in terms of the \emph{instantaneous} (or internal) fields. This brings about a
very close parallel between nonequilibrium and equilibrium processes such as
$dS(t)\equiv dQ(t)/T(t)$\ and the existence of a theorem for irreversible
processes identical in spirit to the \emph{adiabtic theorem} \cite{Landau} for
equilibrium processes. \ The approach proposed recently in Refs.
\cite{Gujrati-Symmetry,Gujrati-I,Gujrati-II,Gujrati-III} deals directly with
$dS(t)$ without a need to use $d_{\text{e}}S(t)$ and $d_{\text{i}}S(t)$
separately, although they can also be evaluated in our approach. Therefore,
our approach should be quite useful as the entropy is a state variable; see
below, however, for other advantages. A particular symmetry is explicitly seen
in our formulation in that both $dQ$ and $dW$\ not only do not change with the
nature of the process but also exhibit an identical formulation in terms of
entropy and volume, respectively. In other words, they are found to be on an
equal footing.\ As a consequence, the first law becomes identical to the
second law (as the Gibbs fundamental relation) so we only deal with a single
law. The Clausius inequality \cite{Clausius} turns into an equality in
\emph{all} cases as $dQ(t)/T(t)$ becomes a state variable, the work is
expresssed as an equality, as if we are dealing with equilibrium processes, a
quite remarkable result in its own right, even though there is irreversible
entropy generation. It has been recently suggested \cite{Anacleto-SecondLaw}
that the use of internal fields is not always consistent with the second law.
We find no such problem in our approach.
\begin{figure}
[ptb]
\begin{center}
\includegraphics[
trim=0.400466in 0.000000in -0.133316in 0.000000in,
height=1.772in,
width=3.4584in
]%
{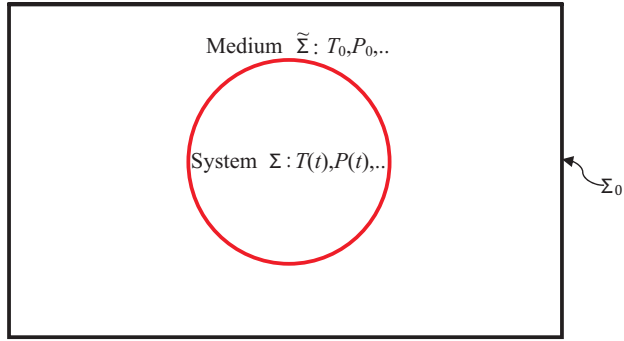}%
\caption{Schematic representation of $\Sigma$, $\widetilde{\Sigma}$ and
$\Sigma_{0}$. We assume that $\Sigma$ and $\widetilde{\Sigma}$ are homogeneous
and in internal equilibrium, but not in equilibrium with each other. The
internal fields $T(t),P(t),\cdots$ fof $\Sigma$ and $T_{0},P_{0},\cdots$ of
$\widetilde{\Sigma}$ are not the same unless they are in equilibrium with each
other. There will be viscous dissipation in $\Sigma$ when not in equilibrium
with $\widetilde{\Sigma}$.}%
\label{Fig_System}%
\end{center}
\end{figure}

The new formulation of the first law contains not only the observables and the
instantaneous fields of the system but also explicitly contains internal
variables so no information about the system is lost. This makes the new
formulation quite beneficial. The instantaneous temperature, pressure, etc. of
the system need not be identical to those of the medium when the system is out
of equilibrium with the medium as has become clear recently
\cite{Nieuwenhuizen,Langer,Gujrati-I}. Thus, our approach also overcomes the
objection raised by Cohen and Mauzerall \cite{Cohen} against the stochastic
trajectory thermodynamics \cite{Sekimoto,Jarzynski,Seifert} to which our
approach is easily extended. We will establish that there are several benefits
in accounting for viscous dissipation within the system in the first law, but
care must be exercised. Our approach, which \emph{assumes} the existence of
entropy even for nonequilibrium states \cite{Gujrati-Symmetry} as was first
proposed by Clausius \cite{Clausius}, is otherwise very general as we do not
restrict ourselves to any particular dynamics and to only isothermal
variations. It allows us to incorporate viscous dissipation, due to the
presence of thermodynamic forces and internal variables, in the discussion
explicitly within this general framework. We consider our system $\Sigma$ (see
Fig. \ref{Fig_System}) surrounded by a very large medium $\widetilde{\Sigma}$
so large that its fields such as its temperature $T_{0}$, pressure $P_{0}$,
etc.\ are not affected by the system. They form an isolated system $\Sigma
_{0}$.\ We consider all systems to be stationary; their relative motion will
be considered only in Sec. \ref{Sect_Applications}. The surface separating
$\Sigma$ and $\widetilde{\Sigma}$, which may represent a piston, will be
treated as having no interesting thermodynamics of its own although it may
participate in irreversibility due to field (such as temperature, pressure,
etc.) differences across it. Because of the enormous size of the medium with
respect to that of the system, all irreversible components in $\Sigma_{0}%
$\ appear within $\Sigma$ ($d_{\text{i}}Q(t)\equiv d_{\text{i}}Q_{0}%
(t),d_{\text{i}}W(t)\equiv d_{\text{i}}W_{0}(t)$) although this is not always
so in the literature \cite{note-3}. The situation of finite surroundings has
been considered by Bizarro \cite{Bizarro} and by us in Ref. \cite{Gujrati-II}%
.) In the following, all extensive quantities pertaining to $\widetilde
{\Sigma}$ and $\Sigma_{0}$ carry an annotation tilde or a suffix $0$,
respectively, and those pertaining to $\Sigma$ carry no suffix. We will use
\emph{body} to refer to any\ one of the above three systems and use symbols
without any suffix to denote its quantities.

Our discussion is easily extended to the case when $\widetilde{\Sigma}$ is
comparable to $\Sigma$ in size, as is easily seen in Secs.
\ref{Sect_Applications} and \ref{Sec-Closed-System}, and Ref.
\cite{Gujrati-II}. In this case, we will refer to $\widetilde{\Sigma}$ as the
surroundings, whose fields will also change with time. However, to keep the
discussion simple, we will treat $\widetilde{\Sigma}$ as an extensively large
medium for the most part.

The layout of the paper is as follows. We briefly review the traditional
formulation of the first law, its lack of symmetry, confusion and its
limitations in the next section. In Sec. \ref{Sec_General_Consideration}, we
follow our newly developed nonequilibrium thermodynamics and unravel the
significance of heat and work in that approach. Several important results are
derived there in the forms of theorems. Some of the results were announced
earlier in Ref. \cite{Gujrati-Heat-Work} but have now been expanded. In Sec.
\ref{Sec_Work_Second_Law}, we argue that only the generalized work is
consistent with the second law. The statistical definition of heat and work is
taken up in Sec. \ref{Sec_Stat_Concepts}, which forms the core of the present
work, where we show that this definition is identical with heat and work
discussed in Sec. \ref{Sec_General_Consideration} and used in the earlier work
\cite{Gujrati-I,Gujrati-II,Gujrati-III}. This section also contains many
important results including the adiabatic theorem for irreversible processes.
A general expression for microstate probabilities is derived here, which
clearly shows nonequilibrium effects in their formulation. We partition
microstate probability changes into external and internal parts $d_{\text{e}%
}p_{i}(t),d_{\text{i}}p_{i}(t)$ to obtain the statistical formulations
for\ $d_{\text{e}}Q(t),d_{\text{i}}Q(t)$ etc. We use the example of an ideal
quantum gas to show how $d_{\text{e}}p_{i}(t),d_{\text{i}}p_{i}(t)$ can be
computed. In the following section, we discuss the Clausius inequalities
(there are two different ones) and the Clausius equality and the work
equality. In Sec. \ref{Sect_Applications}, we consider two applications of our
formulation, where the traditional formulation cannot be applied. In Sec.
\ref{Sect_Other_State_Variables}, we extend our discussion to include an
additional observable. The results derived here are used in Sec.
\ref{Sec-Closed-System} to study a closed system which is allowed to exchange
some kind of "work" with a thermally isolated external object. This is a
classic prototype model studied extensively; see for example, Ref.
\cite{Landau}. The last section contains a brief summary of results and a list
of benefits of our approach.

\section{Traditional Formulation of the First
Law\label{sect_Traditional Formulation}}

\subsection{Traditional Formulation}

To truly appreciate our contribution, it is useful to consider how the first
law is traditionally expressed. We will only consider a single internal
variable $\xi$ for simplicity. Similarly, we will consider only $E$ and $V$
for simplicity as observables with the number of particles $N$ (only a single
species) held fixed, unless noted otherwise. Other variables are easy to
include in the approach as we do in Sec. \ref{Sect_Other_State_Variables}.
Traditionally, $dQ(t)$ represents the amount of heat exchange $d_{\text{e}%
}Q(t)$, so that $-dW(t)\ $is identified with the work exchange $-d_{\text{e}%
}W(t)\equiv d\widetilde{W}(t)=P_{0}d\widetilde{V}(t)$ by the medium to the
system, giving $d_{\text{e}}W(t)=P_{0}dV(t)$. This is true even if the system
cannot be assigned any pressure or if its\emph{ instantaneous} pressure $P(t)$
is different from $P_{0}$. As the net heat exchange $d_{\text{e}%
}Q(t)+d_{\text{e}}\widetilde{Q}(t)=0$, we immediately verify $d_{\text{e}%
}Q(t)\equiv T_{0}d_{\text{e}}S(t).$ There is no such general relation
relations for other heats: $dQ(t)\neq T_{0}dS(t)$ and $d_{\text{i}}Q(t)\neq
T_{0}d_{\text{i}}S(t).$ The traditional formulation of the first law for a
general process reads
\begin{equation}
dE(t)\equiv d_{\text{e}}Q(t)-d_{\text{e}}W(t)\equiv T_{0}d_{\text{e}%
}S(t)-P_{0}dV(t) \label{Standard_Heat_Sum}%
\end{equation}
expressed in terms of either exchange quantities or external fields of
$\widetilde{\Sigma}$. Only when the process is reversible that we have
$dE(t)\equiv T_{0}dS(t)-P_{0}dV(t).$ The external fields are conjugates to the
observables in $\widetilde{\mathbf{X}}$ ($\widetilde{E},\widetilde{V}$), with
the medium affinity $A_{0}$ conjugate to $\widetilde{\xi}$ vanishing.\ Thus,
the above formulation of the first law is, as said earlier, oblivious to the
internal variables and will not be considered when using this formulation. The
following inequality for a cycle, commonly known as the \emph{Clausius
inequality}, follows from $dS>d_{\text{e}}Q(t)/T_{0}$,%
\begin{equation}%
{\textstyle\oint}
d_{\text{e}}S(t)\equiv%
{\textstyle\oint}
d_{\text{e}}Q(t)/T_{0}<0, \label{Clausius_Inequality_00}%
\end{equation}

\subsection{Confusion about Work and Heat}

The situation regarding $dW(t)$ is not always clear. Kondepudi and Prigogine
use $dW(t)=PdV(t)$, where $P$ "...is the pressure at the moving surface," but
they do not mention whether the form is applicable to all processes. Landau
and Lifshitz are explicit and state that $dW(t)=P(t)dV(t)$ for reversible and
irreversible processes \cite[p.45]{Landau}. They require for this the
\emph{existence} of mechanical equilibrium (and so do Refs.
\cite{Kirkwood,Zemansky}) within $\Sigma$ so that at each instant during the
process $P(t)$ must be uniform throughout the body; its equality with $P_{0}$
is not required. However, they do not discuss $dQ(t)$ when they consider
$\Sigma$ out of equilibrium with $\widetilde{\Sigma}$ \cite[Sect. 20]{Landau}.
If we use $dW(t)=P(t)dV(t)$ for the work \emph{done by} $\Sigma$, then this
will alter the heat $dQ(t)$ \emph{added to} $\Sigma$ in Eq. (\ref{First_Law}).
This follows immediately from the fact, not appreciated in the literature to
the best of our knowledge, that $dE(t)$ must be \emph{invariant} to the choice
of internal or external fields. Also, to the best of our knowledge, the issue
of the actual forms of $dQ(t)$ and what is the correct form of $dW(t)$ for
nonequilibrium processes has not been settled in the literature. Indeed,
Kestin \cite[Sect. 5.12]{Kestin} clearly states that distinguishing heat and
work in nonequilibrium states is not unambiguous. We will later argue
otherwise in this work.

\subsection{Lack of Symmetry and Disconnection with the Second Law}

The exchange heat and work $d_{\text{e}}Q(t)$ and $d_{\text{e}}W(t)$\ are very
different\ for irreversible processes, since $d_{\text{e}}Q(t)\equiv
T_{0}d_{\text{e}}S(t)$ is in terms of $d_{\text{e}}S(t)$, which is not the
change in a state variable, while $d_{\text{e}}W(t)=P_{0}dV(t)$ depends on the
change in a state variable.\ Thus, there is an \emph{asymmetry} between the
two in the second equation in Eq. (\ref{Standard_Heat_Sum}) in that they are
not on an equal footing. However, the main disadvantage of the formulation is
that it is always valid, even for a process that violates the second law by
having the exchange heat flow from a colder to a hotter object. This is why we
need both laws in the traditional formulation of non-equilibrium
thermodynamics. We believe that an elegant formulation of a fundamental law
like the first law should not only satisfy the other fundamental law, the
second law but also exhibit as much symmetry as possible.

\subsection{Limitations\label{Subsection_Limitations}}

As the Gibbs fundamental relation, see Eq. (\ref{Gibbs_Fundamental_Relations}%
), for the system explicitly contains the internal variable, the first law in
its traditional formulation and the Gibbs fundamental relation, which codifies
the second law, have different contents for irreversible processes; in
particular, the latter contains more information than the former. Only for
reversible processes, for which internal variables are no longer independent
of the observables, the first law can be used to determine the change $\Delta
S(t)=[dE(t)+P_{0}dV(t)]/T_{0}$ in the entropy of a body. This is not true when
we deal with irreversible processes. This limits the usefulness of the first
law. We now list some of the important limitations of the traditional
formulation below.

\begin{enumerate}
\item[(1)] The law is oblivious to the violation of the second law.

\item[(2)] There is the above mentioned asymmetry between heat and work.

\item[(3)] As both $d_{\text{e}}Q(t)$ and $d_{\text{e}}W(t)$ are determined by
the medium, their knowledge does not provide us with any \emph{direct}
information about the system or its entropy change $\Delta S(t)$ in an
irreversible process. This is easily seen by considering an isolated body. As
$d_{\text{e}}Q(t)=d_{\text{e}}W(t)=0$, the first law has no useful content.
Moreover, it cannot be used to determine $dS(t)=d_{\text{i}}S(t)$ as the body
changes from some state A$^{^{\prime}}$ to a nearby state A unless both states
are equilibrium states. In the latter case, one can compute $dS(t)$ by
consider some equilibrium path connecting the two states. This approach will
not work if one of the two states is or both are out of equilibrium.

\item[(4)] Work and heat cannot always be unambiguously distinguished, a point
already made very strongly by Kestin \cite[Sect. 5.12]{Kestin} and which is at
the heart of the dispute discussed above.

\item[(5)] The heat and work do not always flow through the boundary
\cite[footnote on p. 176]{Kestin}.

\item[(6)] As $d_{\text{e}}S(t)$ can be determined from $d_{\text{e}}Q(t)$,
the aim of any nonequilibrium thermodynamic investigation using the
traditional formulation is to determine the \emph{irreversible entropy change}
$d_{\text{i}}S(t)$. For this, one needs to invoke the Gibbs fundamental
relation in addition to the traditional first law; see for example de Groot
and Mazur \cite{deGroot}.

\item[(7)] As internal variables $\boldsymbol{\xi}(t)$
\cite{deGroot,Langer,Gujrati-I,Gujrati-II,Gujrati-III,Maugin,Bridgman}, which
are very common in nonequilibrium systems such as glasses or in chemical
reactions, play an important role in nonequilibrium thermodynamics, their
behavior will strongly affect the dissipation within the system.
Unfortunately, these variables do \emph{not} couple to the medium
\cite{Maugin}; hence, they do not appear in $d_{\text{e}}Q(t)$ and
$d_{\text{e}}W(t)$, although they control the thermodynamic relaxation and, in
particular, the Gibbs fundamental relation for the system.
\end{enumerate}

\section{General Consideration\label{Sec_General_Consideration}}

As a generalization of the equilibrium concept, we use the \emph{instantaneous
values} of the state variables containing the observables ($\mathbf{X(}%
t\mathbf{):}E(t),V(t)$) and internal variables ($\xi(t)$)
\cite{deGroot,Gujrati-I,Gujrati-II,Gujrati-III} to identify the state of the
body. A body can be in the same state at different times. This is important so
that a system can go through a cyclic process in which the system comes back
to the same initial state at a later instant. However, the entropy of a body
at some instant, besides being a function of the state variables, may also
have an \emph{explicit} dependence on time: $S(t,E(t),V(t),\xi(t)).$ Thus, in
general, the entropy will not be a state function. The first and second laws
are not useful for any computation unless we can ascribe temperatures,
pressures, etc. to $\Sigma$. This requires $\Sigma$ and $\widetilde{\Sigma}$
to be in \emph{internal equilibrium} \cite{Landau,Gujrati-I,Gujrati-II} when
their instantaneous entropies become \emph{state functions }$S(t)=S\left[
E(t),V(t),\xi(t)\right]  ,\widetilde{S}(t)=\widetilde{S}\left[  \widetilde
{E}(t),\widetilde{V}(t),\widetilde{\xi}(t)\right]  $ of (time-dependent) state
variables. Let $W(t)\equiv W\left[  E(t),V(t),\xi(t)\right]  $ and
$\widetilde{W}(t)\equiv\widetilde{W}\left[  \widetilde{E}(t),\widetilde
{V}(t),\widetilde{\boldsymbol{\xi}}(t)\right]  $\ denote the number of
microstates consistent with the state variables for each of them. Then, as
discussed in Refs. \cite{Gujrati-I,Gujrati-II}, we have
\begin{subequations}
\label{Internal_Eq_S_S_tilde}%
\begin{equation}
S(t)=\ln W(t),\widetilde{S}(t)=\ln\widetilde{W}(t) \label{Internall_Eq_S}%
\end{equation}
in other words, the microstates in $W(t)$ or in $\widetilde{W}(t)$ are
\emph{equally probable}.

The temperatures, pressures and affinities (we introduce $\beta(t)\equiv
1/T(t)$ and $\beta_{0}\equiv1/T_{0}$) are given by appropriate standard
derivatives of the entropies:%
\end{subequations}
\begin{subequations}
\begin{align}
\beta(t)  &  =\partial S(t)/\partial E(t),\beta(t)P(t)=\partial S(t)/\partial
V(t),\beta(t)A(t)=(\partial S(t)/\partial\xi
(t);\label{Thermodynamic-Fields-System}\\
\beta_{0}  &  =\partial\widetilde{S}(t)/\partial\widetilde{E}(t),\beta
_{0}P_{0}=\partial\widetilde{S}(t)/\partial\widetilde{V}(t),\partial
\widetilde{S}(t)/\partial\widetilde{V}(t)=0.
\label{Thermodynamic-Fields-Medium}%
\end{align}
The Gibbs fundamental relations are given by
\cite{Landau,Gujrati-I,Gujrati-II}%
\end{subequations}
\begin{equation}
dE(t)=T(t)dS(t)-P(t)dV(t)-A(t)d\xi(t),\ \ d\widetilde{E}(t)=T_{0}%
d\widetilde{S}(t)-P_{0}d\widetilde{V}(t); \label{Gibbs_Fundamental_Relations}%
\end{equation}
The validity of Eq. (\ref{Gibbs_Fundamental_Relations}) requires $\Sigma$ and
$\widetilde{\Sigma}$ to be independently homogeneous
\cite{Gujrati-I,Gujrati-II} and in internal equilibrium. We now prove the
following trivial but important theorem.

\begin{theorem}
\label{Theorem_Irreversible-Heat-Work}Irreversible work and irreversible heat
have identical values:%
\begin{equation}
d_{\text{i}}Q(t)\equiv d_{\text{i}}W(t),
\label{Irreversible_Heat_Work_equality}%
\end{equation}

\end{theorem}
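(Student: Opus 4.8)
The plan is to exploit the fact that the energy $E(t)$ is a state function, so that its differential $dE(t)$ is unique and cannot depend on which conjugate variables — the medium's $(T_{0},P_{0})$ or the system's instantaneous $(T(t),P(t))$ — we use to bookkeep the balance. First I would write down the two expressions for the first law that are already available in the excerpt: the traditional form $dE(t)\equiv d_{\text{e}}Q(t)-d_{\text{e}}W(t)$ from Eq.~(\ref{Standard_Heat_Sum}), and the generalized form $dE(t)=dQ(t)-dW(t)$ from Eq.~(\ref{First_Law}). Since the left-hand sides are literally the same quantity, subtracting the two relations gives $dQ(t)-d_{\text{e}}Q(t)=dW(t)-d_{\text{e}}W(t)$.

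The second step is purely definitional: invoking Eq.~(\ref{Generalized-Heat-Work}), namely $d_{\text{i}}Q(t)\equiv dQ(t)-d_{\text{e}}Q(t)$ and $d_{\text{i}}W(t)\equiv dW(t)-d_{\text{e}}W(t)$, identifies the left-hand side of the previous display with $d_{\text{i}}Q(t)$ and the right-hand side with $d_{\text{i}}W(t)$, which is exactly Eq.~(\ref{Irreversible_Heat_Work_equality}). As a consistency check one could instead substitute the explicit forms $d_{\text{e}}W(t)=P_{0}dV(t)$ and $dW(t)=P(t)dV(t)$ to get $d_{\text{i}}W(t)=[P(t)-P_{0}]dV(t)$, and verify that the analogous decomposition of the heat terms reproduces the same expression; but the subtraction argument avoids needing any explicit form at all.

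The one place that needs care — and the real content behind an otherwise one-line manipulation — is the legitimacy of writing both versions of the first law with the \emph{same} $dE(t)$. This rests on the invariance, stressed in the Limitations subsection, of $dE(t)$ under the choice of internal versus external fields: the energy change of the body is physical and cannot care which set of conjugate variables decomposes it. Equivalently, one must accept that $d_{\text{e}}Q$ and $d_{\text{e}}W$ already account for the complete energy transfer across the boundary, so that $dQ$ and $dW$ differ from them only by contributions generated \emph{inside} $\Sigma$. Granting this invariance, the theorem is purely algebraic, which is why the author can call it trivial. I would close by noting the physical reading: the irreversible work is not lost but is fully converted into irreversible heat, the statement foreshadowed in the abstract, and I would flag that this is distinct from — and must not be confused with — the (false) identification $d_{\text{i}}Q(t)=T_{0}d_{\text{i}}S(t)$.
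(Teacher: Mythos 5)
Your argument is correct in substance but follows a genuinely different route from the paper's own proof, and it carries a logical-ordering caveat worth making explicit. The paper does not prove the theorem by subtracting the two forms of the first law for $\Sigma$; instead it applies the first law to the \emph{isolated} composite $\Sigma_{0}$, for which $dE_{0}=0$ and all exchange quantities vanish, so that $dQ_{0}=d_{\text{i}}Q_{0}$ and $dW_{0}=d_{\text{i}}W_{0}$ and the balance forces $d_{\text{i}}Q_{0}(t)=d_{\text{i}}W_{0}(t)$; the result is then transferred to $\Sigma$ via $d_{\text{i}}Q(t)=d_{\text{i}}Q_{0}(t)$, $d_{\text{i}}W(t)=d_{\text{i}}W_{0}(t)$, which holds because the medium supports no thermodynamic forces and hence no irreversibility. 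Crucially, the paper then \emph{deduces} the validity of $dE(t)=dQ(t)-dW(t)$ for the open system as a corollary of the theorem ("since $dQ(t)-dW(t)\equiv d_{\text{e}}Q(t)-d_{\text{e}}W(t)$"), whereas you take that equation as a premise. So within the paper's own logical architecture your premise is its conclusion, and your appeal to the "invariance of $dE(t)$ under the choice of fields" is doing real work that should be discharged explicitly — most cleanly by invoking the Gibbs fundamental relation (\ref{Gibbs_Fundamental_Relations}) together with the definitions (\ref{Def_dQ}) and (\ref{Def_dW}), which is exactly the paper's announced \emph{alternative} derivation leading to Eq.\ (\ref{Irreversible-contributions}). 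What each approach buys: the $\Sigma_{0}$ argument needs no assumption of internal equilibrium for $\Sigma$ and makes transparent that the identity is just probability/energy conservation for the isolated whole (cf.\ Theorem \ref{Theorem_Wi_Qi_Equivalence}); your subtraction is shorter and purely algebraic, but it is only as strong as the justification you supply for writing both first-law decompositions with the same $dE(t)$. Your closing physical remarks, including the warning against $d_{\text{i}}Q(t)=T_{0}d_{\text{i}}S(t)$, are consistent with Eq.\ (\ref{Heat-inequalities}).
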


\begin{proof}
We have\ $dE_{0}=dV_{0}=A_{0}=0$ for $\Sigma_{0}$. The application of the
first law for $\Sigma_{0}$\ using generalized heat and work, see Eq.
(\ref{Generalized-Heat-Work}), gives%
\begin{equation}
dE_{0}=dQ_{0}(t)-dW_{0}(t)\equiv d_{\text{i}}Q_{0}(t)-d_{\text{i}}W_{0}(t)=0;
\label{Gibbs_Fundamental_Relation_Isolated}%
\end{equation}
there is no exchange heat and no exchange work for $\Sigma_{0}$. As the
irreversibility is only associated with the system $\Sigma$, we have
$d_{\text{i}}Q(t)=d_{\text{i}}Q_{0}(t)$ and $d_{\text{i}}W(t)=d_{\text{i}%
}W_{0}(t)$. The desired equality in Eq. (\ref{Irreversible_Heat_Work_equality}%
) now follows from Eq. (\ref{Gibbs_Fundamental_Relation_Isolated}).
\end{proof}

The statistical demonstration of the identity in Eq.
(\ref{Irreversible_Heat_Work_equality}) is given in Theorem
\ref{Theorem_Wi_Qi_Equivalence}. It is now easy to show that heat and work can
also be used in the first law for the system ($dE(t)=dQ(t)-dW(t)$), since
$dQ(t)-dW(t)\equiv d_{\text{e}}Q(t)-d_{\text{e}}W(t)$. For the medium, it also
holds ($d\widetilde{E}(t)=d\widetilde{Q}(t)-d\widetilde{W}(t)$) as
$d_{\text{i}}\widetilde{Q}(t)=d_{\text{i}}\widetilde{W}(t)=0$. Thus, we can
express the first law for any body by also using generalized heat and work;
see Eq. (\ref{First_Law}).

It was established in Ref. \cite{Gujrati-I} that%
\begin{equation}
dQ(t)=T(t)dS(t). \label{Def_dQ}%
\end{equation}
It is a generalization of $d_{\text{e}}Q(t)\equiv T_{0}d_{\text{e}}S(t)$ to
$dQ(t)$; the latter denotes the \emph{heat added to the system} either through
exchange with its exterior ($d_{\text{e}}Q(t)$) or by dissipative internal
forces within ($d_{\text{i}}Q(t)$). Similarly,
\begin{equation}
dW(t)=d_{\text{e}}W(t)+d_{\text{i}}W(t)=P(t)dV(t)+A(t)d\xi(t) \label{Def_dW}%
\end{equation}
is the generalization of work done by the system: it includes the work done on
its exterior ($d_{\text{e}}W(t)$) and the (internal) work done by dissipative
internal forces ($d_{\text{i}}W(t)$).

To appreciate the importance of the new definition of heat and work, let us
for the moment assume that there is no internal variable and that $P(t)>P_{0}%
$. The irreversible work $d_{\text{i}}W_{0}(t)$ done in $\Sigma_{0}$ by the
pressure difference $\Delta P(t)=P(t)-P_{0}>0$ is
\[
dW_{0}(t)\equiv d_{\text{i}}W_{0}(t)=\Delta P(t)dV(t)>0,
\]
since $dV(t)>0$, and appears as the irreversible work within the system and
results in raising the kinetic energy $dK_{\text{S}}$ of the center-of-mass of
the surface separating $\Sigma$ and $\widetilde{\Sigma}$ and overcoming work
$dW_{\text{fr}}(t)$ done by all sorts of viscous or frictional drag. Thus,
\begin{equation}
d_{\text{i}}W_{0}(t)\equiv dK_{\text{S}}+dW_{\text{fr}}(t).
\label{Dissipation_work}%
\end{equation}
Because of the stochasticity associated with any statistical system, both
energies on the right side dissipate among the particles so as to increase the
entropy and appear in the form of heat ($d_{\text{i}}Q_{0}(t)=d_{\text{i}%
}Q(t)>0$) within the isolated system \cite{Gujrati-Symmetry}. Thus, when there
are irreversible processes going on, it is natural to generalize heat from
$d_{\text{e}}Q(t)\ $in Eq. (\ref{Standard_Heat_Sum}) to include $d_{\text{i}%
}Q(t)=d_{\text{i}}Q_{0}(t)$ and identify $dQ(t)$ as the heat \emph{added to
the system}. Similarly, we need to generalize work from $d_{\text{e}%
}W(t)=P_{0}dV(t)$ to $dW(t)=P(t)dV(t)$ and identify it as work \emph{done by
the system}. In the presence of the internal variable, there is an additional
contribution $(A(t)-A_{0})d\xi(t)=A(t)d\xi(t)>0$ to $dW(t)$. This does not
change the conclusions above.

We finally conclude that
\begin{equation}
dE(t)=d_{\text{e}}Q(t)-d_{\text{e}}W(t)\equiv dQ(t)-dW(t)
\label{General_First_Law}%
\end{equation}
which demonstrates that both formulations are \emph{valid }for the first law.
However, the most important result is given by Eq. (\ref{Def_dQ}). We also see
that $d_{\text{i}}Q(t)\neq T_{0}d_{\text{i}}S(t),\ d_{\text{i}}Q(t)\neq
T(t)d_{\text{i}}S(t),$ even though $d_{\text{e}}Q(t)=T_{0}d_{\text{e}}S(t)$;
see Eq. (\ref{Irreversible_Heat_0}).

The equality in Eq. (\ref{Irreversible_Heat_Work_equality}) can also be
obtained by the use of the Gibbs fundamental relation. We follow the approach
initiated in Ref. \cite{Gujrati-II}, and rewrite $\allowbreak dE(t)$ by
explicitly exhibiting the thermodynamic forces as follows:%
\[
dE(t)=T_{0}d_{\text{e}}S(t)-P_{0}dV(t)+T_{0}d_{\text{i}}S(t)+[T(t)-T_{0}%
]dS(t)+[P_{0}-P(t)]dV(t)-A(t)d\xi(t)
\]
to conclude that
\begin{equation}
T_{0}d_{\text{i}}S(t)+[T(t)-T_{0}]dS(t)+[P_{0}-P(t)]dV(t)-A(t)d\xi(t)=0.
\label{Irreversible-contributions}%
\end{equation}
For this to be valid, each of the last three terms must be non-positive:%
\begin{equation}
\lbrack T_{0}-T(t)]dS(t)\geq0,[P(t)-P_{0}]dV(t)\geq0,A(t)d\xi(t)\geq0,
\label{Irreversible-entropy-contributions}%
\end{equation}
\ to ensure that $d_{\text{i}}S(t)\geq0$. The factors $T_{0}-T(t),$
$P(t)-P_{0}$ and $A(t)$\ in front of the extensive state variables are the
corresponding thermodynamic forces that act to bring the system to
equilibrium. In the process, each force generates its own irreversible entropy
generation \cite{Gujrati-II}. The equalities occur when thermodynamic forces vanish.

It is useful to acknowledge at this point that there are no thermodynamic
forces in the medium. To see this, we consider Eq.
(\ref{Gibbs_Fundamental_Relations})\ for $d\widetilde{E}(t)$,\ in which both
terms contain the constant fields $T_{0}$ and $P_{0}$ of the medium, clearly
showing that the thermodynamic forces are zero. This means that there cannot
be any irreversible entropy generation within the medium; they only appear
within the system, as we have said earlier.

We now recognize that
\begin{subequations}
\label{Irreversible_Heat_0}%
\begin{align}
d_{\text{i}}Q(t)  &  =T_{0}d_{\text{i}}S(t)+[T(t)-T_{0}%
]dS(t),\label{Irreversible_Heat_01}\\
&  =T(t)d_{\text{i}}S(t)+[T(t)-T_{0}]d_{\text{e}}S(t)
\label{Irreversible_Heat_02}%
\end{align}
and
\end{subequations}
\begin{equation}
d_{\text{i}}W(t)=[P(t)-P_{0}]dV(t)+A(t)d\xi(t). \label{Irreversible-Work_0}%
\end{equation}
Their equality merely reflects the fact that in the partition
$dE(t)=d_{\text{e}}E(t)+d_{\text{i}}E(t)$, $d_{\text{i}}E(t)\equiv0$. We also
note that while each term in $d_{\text{i}}W(t)$ is non-negative, this is not
so for $d_{\text{i}}Q(t)$ in which the first term is non-negative, but the
second term in Eq. (\ref{Irreversible_Heat_0}) is non-positive. This not only
means that \emph{the physics of }$d_{\text{i}}Q(t)$\emph{ and }$d_{\text{i}%
}S(t)$\emph{ is very different} but also that
\begin{equation}
d_{\text{i}}Q(t)\leq T_{0}d_{\text{i}}S(t)\,,dQ(t)\leq T_{0}dS(t);
\label{Heat-inequalities}%
\end{equation}
the equalities occur only for isothermal ($T=T_{0}$) or adiabatic ($dS=0$)
processes. Even though $dQ(t)=T(t)dS(t)$ for a system \emph{not} in
equilibrium with the medium, we have $d_{\text{e}}Q(t)\neq T(t)d_{\text{e}%
}S(t),\ d_{\text{i}}Q(t)\neq T(t)d_{\text{i}}S(t).$ \emph{It should become
evident by now that it would be incorrect to use }$dS(t)=d_{\text{e}%
}Q(t)/T_{0}+d_{\text{i}}S(t)$\emph{ to conclude }$d_{\text{i}}Q(t)=T_{0}%
d_{\text{i}}S(t)$\emph{.}

In the general case, the first law can be written as
\[
dE(t)=\frac{\partial E(t)}{\partial\overline{\mathbf{Z}}(t)}\cdot
d\overline{\mathbf{Z}}(t),
\]
where $\overline{\mathbf{Z}}(t)$ contains $S(t)$ and the set $\mathbf{Z}%
_{E}(t)$ consisting of all state variables except $E(t).$ We clearly see that
each term in the scalar product has the same mathematical form, ensuring that
\emph{all terms are on an equal footing}. It is this symmetry that was absent
in the traditional formulation, but is present in the new formulation of the
first law. As this is also the general form of the Gibbs fundamental relation,
\emph{the two laws have reduced to a single law}, as we have claimed.
Therefore, out formulation of the first law will always remain consistent with
the second law.

\section{Generalized Work and the Second Law\label{Sec_Work_Second_Law}}

Let us follow the consequences of this particular generalization a bit further
by again restricting to no internal variable for simplicity, and prove that
only $dW(t)=P(t)dV(t),d\widetilde{W}(t)=P_{0}d\widetilde{V}(t)$ using the
internal pressures of the bodies is consistent with the second law, and not
$dW(t)=P_{0}dV(t),d\widetilde{W}(t)=P(t)d\widetilde{V}(t)$ which use the
pressures external to the bodies. These choices for work are \emph{symmetric}
as opposed to the traditional formulation in Sec.
\ref{sect_Traditional Formulation} in which there is no symmetry between
$d_{\text{e}}W(t)$ and$\ d_{\text{e}}\widetilde{W}(t)$.

We take $P(t)>P_{0}$ and consider the choice $dW(t)=P(t)dV(t)$ etc. We obtain
$dW(t)+d\widetilde{W}(t)=P(t)dV(t)+P_{0}d\widetilde{V}(t)$ valid for any
arbitrary $dV(t)=-d\widetilde{V}(t)$ so that
\begin{equation}
d_{\text{i}}W(t)=\left[  P(t)-P_{0}\right]  dV(t)>0, \label{Work_Forms}%
\end{equation}
which is consistent with the second law and proves the above assertion, once
we recognize that $dV(t)>0$.\ The second choice will result in the violation
the second law, since $dW_{\text{i}}(t)=(P_{0}-P(t))dV(t)<0,$ a physical
impossibility. Thus, we must write the first law for the system and the
medium, respectively, as $dE(t)=dQ(t)-P(t)dV(t),\ d\widetilde{E}%
(t)=d\widetilde{Q}(t)-P_{0}d\widetilde{V}(t).$ The above discussion is easily
extended to include internal variables without affecting the above conclusion.

The above demonstration establishes that the work done by a body is given by
Eq. (\ref{Def_dW}) in \emph{all} cases contrary to the traditional
formulation, see Eq, (\ref{Standard_Heat_Sum}), in which it is given by
$P_{0}dV(t)$. The generalized formulation brings out the another symmetry:
under the interchange system$\Longleftrightarrow$medium, work and heat for any
body always uses its own internal fields. This symmetry is absent in the
traditional formulation. The new symmetry will prove very useful when the
medium is not extensively large compared to the system or when we need to
consider mixing of gases, free expansion, etc. where there is no clear
separation between different parts of an isolated system into a system and a medium.

\section{Statistical Definition of Work and Heat\label{Sec_Stat_Concepts}}

\subsection{System}

\subsubsection{System not in Internal Equilibrium}

Before proceeding further, let us see how the generalized heat and work could
be understood from a statistical point of view. We consider two possible
neighboring nonequilibrium states A and A$^{\prime}$ at different times $t$
and $t^{\prime}<t$, respectively, so that the differences in their state
variables $dE(t)\equiv E(t)-E^{\prime}(t^{\prime}),$ $dV(t)\equiv
V(t)-V^{\prime}(t^{\prime}),d\xi(t)\equiv\xi(t)-\xi^{\prime}(t^{\prime})$ and
the difference $dS(t)\equiv S(t)-S^{\prime}(t^{\prime})$ in their entropies
are infinitesimal. We use the index $i$ to label the microstates of the system
and let $p_{i}(t),p_{i}^{\prime}(t^{\prime})$ their their probabilities in A
and A$^{\prime}$, respectively$.$ These probabilities are functions of the
state variables (including the number of particles, but that remains constant)
and may also have an explicit time dependence. Thus, the discussion here does
not require the system to be in internal equilibrium. Obviously \cite{note-4}
\begin{equation}%
{\textstyle\sum\nolimits_{i}}
p_{i}(t)=%
{\textstyle\sum\nolimits_{i}}
p_{i}^{\prime}(t)\equiv1. \label{Entropy_Sum}%
\end{equation}
We will see below, see Theorem \ref{Theorem_Wi_Qi_Equivalence}, that the
probability conservation is behind the statistical demonstration of the
identity in Eq. (\ref{Irreversible_Heat_Work_equality}). The energy $E_{i}$ of
the $i$th microstate, on the other hand, depends on $V(t)$ and $\xi(t)$ [in
general, $E_{i}$ will depend on the set $\mathbf{Z}_{E}(t)$], but will have no
explicit $t$-dependence. The entropy $S$ and the energy $E$ are given by the
following averages%
\begin{equation}
S(t)\equiv%
{\textstyle\sum\nolimits_{i}}
p_{i}(t)\eta_{i}(t),\ E(t)\equiv%
{\textstyle\sum\nolimits_{i}}
p_{i}(t)E_{i}(t), \label{Entropy_Energy}%
\end{equation}
where
\begin{equation}
\eta_{i}\equiv-\ln p_{i}(t) \label{Uncertainty}%
\end{equation}
is the uncertainty of Shanon or the negative of the index of probability of
Gibbs \cite{Gujrati-Symmetry,Gujrati-III} and $E_{i}(t)$ is the energy of the
$i$th microstate. (We will avoid the use of microstate "entropy," to refer to
$\eta_{i}$, which has become common in the literature these days.) The entropy
expression is due to Gibbs \cite{Gibbs}. We have exhibited a time-dependence
in $E_{i}(t)$\ to reflect the fact that this energy can change as $V(t)$ and
$\xi(t)$ change during the transition A$^{\prime}\rightarrow$A. The microstate
probability will also change in time. In particular,\emph{ a microstate may
disappear or a new microstate may emerge in time. }This is most easily seen by
recognizing, see Eq. (\ref{Internal_Eq_S_S_tilde}), that the entropy is
determined by $W(t).$ As entropy changes, $W(t)$ must change so that either
some previous microstates disappear or some new microstate emerge. All this
will become clear below in Sec. \ref{Sect_Ideal_Gas} where we discuss a simple
example of an ideal gas. We now prove

\begin{theorem}
\label{Theorem_1}$E(t)$ is a function of $V(t),\xi(t)$ and $S(t)$, even though
$E_{i}[V(t),\xi(t)]$ are functions of $V(t)$ and $\xi(t)$\ only.
\end{theorem}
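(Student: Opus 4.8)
The plan is to start from the two defining relations in Eq.~(\ref{Entropy_Energy}), namely $E(t)=\sum_i p_i(t)E_i[V(t),\xi(t)]$ and $S(t)=-\sum_i p_i(t)\ln p_i(t)$, and to show that the first can be regarded as a function of $V(t),\xi(t)$ and $S(t)$. The key observation is that the microstate energies $E_i$ carry \emph{no} explicit $t$-dependence and depend only on the mechanical variables $V(t),\xi(t)$; all of the remaining time-dependence of $E(t)$ enters \emph{solely} through the probabilities $p_i(t)$. So the real content of the theorem is that, once $V(t)$ and $\xi(t)$ are held fixed, the family of probability distributions $\{p_i(t)\}$ is parametrized (at least as far as the value of $E$ is concerned) by the single number $S(t)$.

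First I would differentiate $E(t)$ at fixed $V(t),\xi(t)$, which isolates $dE\big|_{V,\xi}=\sum_i E_i\,dp_i$, and likewise $dS\big|_{V,\xi}=-\sum_i(\ln p_i+1)\,dp_i=-\sum_i\eta_i\,dp_i$ using $\sum_i dp_i=0$ from Eq.~(\ref{Entropy_Sum}). The natural next step is to compare the first law written in the generalized form, $dE=dQ-dW$ with $dW=P\,dV+A\,d\xi$ from Eq.~(\ref{Def_dW}) and $dQ=T\,dS$ from Eq.~(\ref{Def_dQ}): subtracting the mechanical (work) part leaves exactly the $V,\xi$-isometric change $dE\big|_{V,\xi}=T\,dS$, i.e. the increment in $E$ at fixed $V,\xi$ is governed entirely by the increment in $S$, with proportionality constant $T(t)=\partial S/\partial E$. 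This is precisely the statement that $E$, as a function on the state space, can be traded for the variables $(S,V,\xi)$ rather than $(\{p_i\},V,\xi)$ — equivalently, that the Legendre-type pairing between $E$ and $S$ identifies $T=\partial E/\partial S\big|_{V,\xi}$, consistent with Eq.~(\ref{Thermodynamic-Fields-System}).

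The cleanest way to package this, and the step I expect to be the crux, is to argue that among all probability distributions on the microstates at fixed $V,\xi$, specifying the value of $S$ fixes the value of $E$: this is immediate when the system is in internal equilibrium, since then by Eq.~(\ref{Internall_Eq_S}) all $W(t)$ allowed microstates are equally probable, $p_i=1/W$, so $S=\ln W$ determines $W$, and $W$ together with $(V,\xi)$ determines which microstates are occupied and hence $E=\sum_i E_i/W$; thus $E=E[S,V,\xi]$ as a genuine state function. For a system not in internal equilibrium one instead relies on the differential statement above — $dE\big|_{V,\xi}=T\,dS$ with $T$ itself a function of the state — together with the explicit-time-dependence caveat noted after Eq.~(\ref{Entropy_Energy}): $E$ is a function of $S(t),V(t),\xi(t)$ possibly with an additional explicit $t$ argument, exactly as the entropy $S(t,E,V,\xi)$ was allowed to have. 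The main obstacle is therefore not computational but conceptual: making precise, in the non-internal-equilibrium case, the sense in which "$E$ is a function of $S$" holds — which is the infinitesimal (first-law) sense $dE\big|_{V,\xi}=T\,dS$ rather than a global functional identity — and flagging that the internal-equilibrium assumption is what upgrades it to a bona fide state function.
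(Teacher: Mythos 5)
Your proposal contains the paper's entire proof already in its first differentiation step: the paper simply writes $dE=\sum_i E_i\,dp_i+\sum_i p_i\,dE_i$, observes that the sum in which the $p_i$ are held fixed is the isentropic contribution $\left.  dE\right\vert _{S}$ depending only on $dV,d\xi$, that the sum in which the $E_i$ are held fixed is the isometric contribution $\left.  dE\right\vert _{V,\xi}$, and stops there --- Eq.~(\ref{Energy_Partition}) \emph{is} the proof. Everything you add after that is supplementary, and two of your additions deserve comment. First, invoking $dQ=T\,dS$ and $dW=P\,dV+A\,d\xi$ at this stage is logically backwards relative to the paper: those identities are established only \emph{afterwards} (Theorem \ref{Theorem_Heat_Work} and Eq.~(\ref{System_dQ_dS}), the latter requiring internal equilibrium), and they rest on the very decomposition that Theorem \ref{Theorem_1} is meant to provide, so using them here is circular within the paper's logical order. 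Second, you correctly put your finger on a real soft spot that the paper glosses over: at fixed $V,\xi$ many distributions $\{p_i\}$ share the same $S$ but different $E$, so ``the isometric change is governed by $S$'' is only airtight once internal equilibrium (entropy maximization at fixed $E,V,\xi$, Eq.~(\ref{microstate probability})) pins the distribution down; flagging this is a genuine gain in precision, though your equal-probability argument is better phrased as invertibility of the state function $S(E,V,\xi)$ in $E$ than as $W$ determining ``which microstates are occupied.'' One minor slip: since $\eta_i=-\ln p_i$ and $\sum_i dp_i=0$, one has $dS\vert_{V,\xi}=-\sum_i(\ln p_i+1)\,dp_i=+\sum_i\eta_i\,dp_i$, not $-\sum_i\eta_i\,dp_i$.
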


\begin{proof}
We consider the differential
\[
dE(t)\equiv%
{\textstyle\sum\nolimits_{i}}
E_{i}(t)dp_{i}(t)+%
{\textstyle\sum\nolimits_{i}}
p_{i}(t)dE_{i}(t).
\]
As $p_{i}(t)$ are unchanged in the first sum, this sum is evaluated at
\emph{constant entropy}. Thus, this contribution is isentropic which we denote
by $\left.  dE\right\vert _{S}$. The microstate energies $E_{i}$ are unchanged
in the second sum so this contribution refers to an isometric process at fixed
$V(t)$ and $\xi(t)$ and we denote the contribution by $\left.  dE\right\vert
_{V,\xi}$. Thus,%
\begin{equation}
dE\equiv\left.  dE\right\vert _{V,\xi}+\left.  dE\right\vert _{S}.
\label{Energy_Partition}%
\end{equation}
This proves that $E(t)$ is a function of $S(t)$,$V(t)$ and $\xi(t)$.
\end{proof}

In general, $E(t)$ is a function of $S(t)$\ and the set $\mathbf{Z}_{E}(t)$.
We introduce a special process, to be called an \emph{isometric} process,
which is a process at constant $\mathbf{Z}_{E}(t)$ and is a generalization of
an \emph{isochoric} process. In this process, the work done by each mechanical
variables in $\mathbf{Z}_{E}(t)$ remains zero. We now prove the following
theorem that establishes the physical significance of the two contributions.

\begin{theorem}
\label{Theorem_Heat_Work}The isentropic contribution represents the
generalized work $dW(t)$ and the isometric contribution represents the
generalized heat $dQ(t)$.
\end{theorem}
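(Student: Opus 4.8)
The plan is to start from the energy differential decomposition already established in Theorem~\ref{Theorem_1}, namely
\[
dE(t)=\left.dE\right\vert_{S}+\left.dE\right\vert_{V,\xi},
\]
where $\left.dE\right\vert_{S}\equiv\sum_{i}E_{i}(t)\,dp_{i}(t)$ is the isentropic piece (microstate energies fixed, probabilities varying) and $\left.dE\right\vert_{V,\xi}\equiv\sum_{i}p_{i}(t)\,dE_{i}(t)$ is the isometric piece (probabilities fixed, microstate energies varying through $V(t),\xi(t)$). I would then match these two pieces against the first law $dE(t)=dQ(t)-dW(t)$ of Eq.~(\ref{First_Law}) together with the identifications $dQ(t)=T(t)dS(t)$ from Eq.~(\ref{Def_dQ}) and $dW(t)=P(t)dV(t)+A(t)d\xi(t)$ from Eq.~(\ref{Def_dW}).

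The first key step is to evaluate $\left.dE\right\vert_{V,\xi}$: since $E_{i}[V(t),\xi(t)]$, we have $dE_{i}=(\partial E_{i}/\partial V)dV+(\partial E_{i}/\partial\xi)d\xi$, so $\left.dE\right\vert_{V,\xi}=\big(\sum_i p_i\,\partial E_i/\partial V\big)dV+\big(\sum_i p_i\,\partial E_i/\partial\xi\big)d\xi$. I would then invoke the standard statistical-mechanical identity (valid in internal equilibrium, where $p_i$ has its canonical-type dependence on the state variables, and derivable from $\beta(t)P(t)=\partial S/\partial V$ in Eq.~(\ref{Thermodynamic-Fields-System})) that $\sum_i p_i\,\partial E_i/\partial V=-P(t)$ and $\sum_i p_i\,\partial E_i/\partial\xi=-A(t)$; hence $\left.dE\right\vert_{V,\xi}=-P(t)dV-A(t)d\xi=-dW(t)$. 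The second key step is then automatic: subtracting from the first law gives $\left.dE\right\vert_{S}=dE(t)+dW(t)=dQ(t)$. Equivalently, one can compute $\left.dE\right\vert_{S}$ directly: from $S(t)=-\sum_i p_i\ln p_i$ one gets $dS\big|_{E_i\ \text{fixed part}}=-\sum_i(\ln p_i+1)dp_i=-\sum_i\eta_i^{-1}\!\cdots$; more cleanly, $\sum_i dp_i=0$ by Eq.~(\ref{Entropy_Sum}) implies $dS=-\sum_i(\ln p_i)dp_i=\sum_i\eta_i\,dp_i$ for the probability-variation part, and the canonical relation $\ln p_i=-\beta(t)(E_i-F)$ gives $\sum_i E_i\,dp_i=T(t)\sum_i\eta_i\,dp_i+F\sum_i dp_i=T(t)\,dS(t)=dQ(t)$, using $\sum_i dp_i=0$ again. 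Either route closes the identification.

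The main obstacle I anticipate is justifying the mechanical-force averages $\sum_i p_i\,\partial E_i/\partial V=-P(t)$ and $\sum_i p_i\,\partial E_i/\partial\xi=-A(t)$ (and, in the alternative route, the form $\ln p_i=-\beta(t)(E_i-F)$) \emph{for a system merely in internal equilibrium and arbitrarily far from equilibrium with the medium}. The honest way to handle this is to lean on Eqs.~(\ref{Internal_Eq_S_S_tilde})--(\ref{Thermodynamic-Fields-System}): internal equilibrium means the $W(t)$ microstates are equiprobable at fixed $E(t),V(t),\xi(t)$, so the instantaneous fields $T(t),P(t),A(t)$ are genuine derivatives of $S[E,V,\xi]$, and the microcanonical-to-effective-canonical passage that underlies $\langle\partial E_i/\partial V\rangle=-P(t)$ goes through exactly as in equilibrium, with $T_0,P_0$ replaced by the internal $T(t),P(t)$. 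I would state this dependence on internal equilibrium explicitly, note that it is precisely the hypothesis under which Eqs.~(\ref{Def_dQ})--(\ref{Def_dW}) were asserted, and remark that the bare decomposition $dE=\left.dE\right\vert_S+\left.dE\right\vert_{V,\xi}$ itself (Theorem~\ref{Theorem_1}) holds with no such assumption, so the \emph{isentropic/isometric} labelling is universal even though its identification with the field-theoretic $dW,dQ$ uses internal equilibrium.
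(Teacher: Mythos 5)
Your first route reproduces the paper's central calculation: average the microstate forces $-\partial E_{i}/\partial V$ and $-\partial E_{i}/\partial\xi$ over $p_{i}$ to turn $\sum_{i}p_{i}\,dE_{i}$ into $-P(t)dV-A(t)d\xi=-dW(t)$, and assign the remainder $\sum_{i}E_{i}\,dp_{i}$ to $dQ(t)$. But your labels are crossed, and since the theorem is precisely a statement about which label attaches to which term, the proof as written asserts the converse of the claim. The piece $\sum_{i}p_{i}\,dE_{i}$ (probabilities, hence $S=\sum_{i}p_{i}\eta_{i}$, held fixed while the $E_{i}$ vary with $V,\xi$) is the \emph{isentropic} contribution $\left.dE\right\vert_{S}$ and is the one equal to $-dW(t)$; the piece $\sum_{i}E_{i}\,dp_{i}$ (the $E_{i}$, hence $V,\xi$, held fixed) is the \emph{isometric} contribution $\left.dE\right\vert_{V,\xi}$ and is the one equal to $dQ(t)$; see Eqs.~(\ref{dE_dW}) and (\ref{dE_dQ}). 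You write the opposite identifications, and your own parenthetical glosses contradict the names you attach: ``probabilities varying'' cannot be isentropic, since varying $p_{i}$ varies the entropy, and ``energies varying through $V(t),\xi(t)$'' cannot be isometric. In fairness, the prose of the paper's Theorem~\ref{Theorem_1} contains the same first-sum/second-sum mix-up, but the convention that governs Theorem~\ref{Theorem_Heat_Work} is fixed unambiguously by Eqs.~(\ref{dE_dW}) and (\ref{dE_dQ}). Swap the two labels throughout and your first route becomes essentially the paper's proof.

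The second discrepancy is one of generality. You justify $\sum_{i}p_{i}\,\partial E_{i}/\partial V=-P(t)$ by appealing to internal equilibrium and the thermodynamic derivative in Eq.~(\ref{Thermodynamic-Fields-System}), and you conclude that the identification of the two pieces with $dW$ and $dQ$ requires internal equilibrium. The paper does the opposite: it \emph{defines} the instantaneous pressure and affinity as the statistical averages $P(t)=\sum_{i}p_{i}P_{i}$, $A(t)=\sum_{i}p_{i}A_{i}$ with $P_{i}\equiv-\partial E_{i}/\partial V$, Eq.~(\ref{Statistical Fields}), so that Theorem~\ref{Theorem_Heat_Work} holds with no internal-equilibrium hypothesis; the coincidence of these averages with $-(\partial E/\partial V)_{S,\xi}$ and of $dQ/dS$ with $(\partial E/\partial S)_{V,\xi}=T(t)$ is deferred to the internal-equilibrium subsection, Eqs.~(\ref{Pressure}) and (\ref{System_dQ_dS}). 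This matters because the validity of the partition arbitrarily far from equilibrium is one of the paper's advertised results. Your alternative route for the heat term, via $\ln p_{i}=-\beta(E_{i}-F)$, likewise needs internal equilibrium, and in fact needs the fuller form Eq.~(\ref{microstate probability}) together with the identity (\ref{Work_Identity}), since $\eta_{i}$ also contains $P(t)V_{i}+A(t)\xi_{i}$; that argument is essentially the content of the paper's separate Theorem~\ref{Theorem_Work_Identification}, not of this one.
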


\begin{proof}
We follow Landau and Lifshitz \cite{Landau} and rewrite the second term in Eq.
(\ref{Energy_Partition}) as%
\[
\left.  dE\right\vert _{S}\equiv%
{\textstyle\sum\nolimits_{i}}
p_{i}(t)\frac{\partial E_{i}}{\partial V}dV(t)+%
{\textstyle\sum\nolimits_{i}}
p_{i}(t)\frac{\partial E_{i}}{\partial\xi}d\xi(t)=-%
{\textstyle\sum\nolimits_{i}}
p_{i}(t)P_{i}(t)dV(t)-%
{\textstyle\sum\nolimits_{i}}
p_{i}(t)A_{i}(t)d\xi(t)
\]
where we have introduced $P_{i}(t)\equiv-\partial E_{i}(t)/\partial V(t)$\ as
the pressure produced by the $i$th microstate of the system \cite[p.
67]{Landau-QM} on its boundary. This pressure corresponds to a force pointing
\emph{away} from the system. Similarly, $A_{i}(t)\equiv-\partial
E_{i}(t)/\partial\xi(t)$\ as the affinity of the $i$th microstate. We assume
that the changes $dV(t)$ and $d\xi(t)$\ are the same for all microstates so
that they can be taken out of the summations. In terms of
\begin{equation}
P(t)=%
{\textstyle\sum\nolimits_{i}}
p_{i}(t)P_{i}(t),A(t)=%
{\textstyle\sum\nolimits_{i}}
p_{i}(t)A_{i}(t), \label{Statistical Fields}%
\end{equation}
respectively, which define the instantaneous \emph{average pressure}
$P(t)$\emph{ and affinity} $A(t)$\emph{ }of the system, respectively, we can
relate $\left.  dE\right\vert _{S}$ with $dW(t)$ and nit with $d_{\text{e}%
}W(t)$:
\begin{equation}
dW(t)\equiv-\left.  dE\right\vert _{S}=P(t)dV(t)+A(t)d\xi(t). \label{dE_dW}%
\end{equation}
This identification then also proves that the heat in the first law must be
properly identified with $dQ$ and not with $d_{\text{e}}Q$. Accordingly,
\begin{equation}
dQ\equiv\left.  dE\right\vert _{V,\xi}\equiv%
{\textstyle\sum\nolimits_{i}}
E_{i}dp_{i}, \label{dE_dQ}%
\end{equation}
i.e., $dQ$ for irreversible processes is \emph{the isometric change in the
energy}.
\end{proof}

We should point out that by assuming $dV(t)$ and $d\xi(t)$ above to be the
same for all microstates, the statistical nature of $\left.  dE\right\vert
_{S}$ is reflected in the statistical nature of $P(t)$ and $A(t)$, the
internal fields of the system. Thus, the internal fields are fluctuating
quantities from microstate to microstate when $dV(t)$ and $d\xi(t)$ are not
treated statistically.

In general, $\left.  dE\right\vert _{S}$ will be a sum of various works
$dW_{Z}(t)=-(\partial E/\partial Z)_{S,\mathbf{Z}_{E}^{\prime}}dZ$, with
$\mathbf{Z}_{E}^{\prime}(t)$ consisting of all state variables in
$\mathbf{Z}_{E}(t)$ except $Z(t)$ used in the derivative:%
\begin{equation}
\left.  dE\right\vert _{S}\equiv-dW(t)=-\sum_{Z\in\mathbf{Z}_{E}(t)}dW_{Z}(t).
\label{General-Work}%
\end{equation}
This is again consistent with the previously mentioned symmetry in the new formulation.

The above discussion proves that the definition of heat and work does not
require the establishment of the internal equilibrium within the system.\ It
is useful to compare the above approach with the traditional formulation of
the first law in terms of $d_{\text{e}}Q(t)$ and $d_{\text{e}}W(t)$:
\emph{both formulations are valid in all cases}. It should be mentioned that
the above identification is well known in equilibrium statistical mechanics,
but its extension to irreversible processes and our interpreation is, to the
best of our knowledge, is novel.

We now prove a trivial but conceptually an important theorem.

\begin{theorem}
\label{Theorem_Traditional_Heat_Work}Heat and work in the traditional
formulation of the first law do not have a clear division as in Eq.
(\ref{Energy_Partition}).
\end{theorem}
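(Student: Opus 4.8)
The plan is to contrast the clean microstate-level decomposition $dE = \left. dE\right|_{V,\xi} + \left. dE\right|_{S}$ of Theorem~\ref{Theorem_1} with what happens when one insists that the work term be $d_{\text{e}}W(t) = P_0\,dV(t)$ rather than $dW(t) = P(t)\,dV(t)$. First I would start from the same differential used in the proof of Theorem~\ref{Theorem_1}, namely $dE(t) \equiv \sum_i E_i(t)\,dp_i(t) + \sum_i p_i(t)\,dE_i(t)$, and recall from Theorem~\ref{Theorem_Heat_Work} that the two sums are \emph{precisely} $dQ(t)$ (isometric, driven by probability redistribution at fixed $E_i$) and $-dW(t)$ (isentropic, driven by the shift of the levels $E_i$ at fixed $p_i$). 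The key point is that these two mechanisms are mutually exclusive at the microstate level: one holds $p_i$ fixed, the other holds $E_i$ fixed, so the split is unambiguous and exhausts $dE$.

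Next I would write the traditional first law $dE(t) = d_{\text{e}}Q(t) - d_{\text{e}}W(t)$ and subtract it from $dE(t) = dQ(t) - dW(t)$ to obtain $d_{\text{i}}Q(t) = d_{\text{i}}W(t)$ (Theorem~\ref{Theorem_Irreversible-Heat-Work}), and then express $d_{\text{e}}W(t) = P_0\,dV(t)$ in terms of microstate quantities. Since $P(t) = \sum_i p_i(t) P_i(t)$ with $P_i(t) = -\partial E_i/\partial V$, we have $d_{\text{e}}W(t) = P_0\,dV(t) = P(t)\,dV(t) - [P(t)-P_0]\,dV(t) = -\left.dE\right|_S - d_{\text{i}}W(t)$. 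Hence the exchange work is \emph{not} the pure isentropic piece $\left.dE\right|_S$; it differs from it by $d_{\text{i}}W(t) = [P(t)-P_0]\,dV(t)$, which by Eq.~(\ref{Irreversible_Heat_0}) is itself a mixture containing $[T(t)-T_0]\,dS(t)$, a genuinely isometric-type (entropy-change) contribution. Consequently $d_{\text{e}}Q(t) = dE(t) + d_{\text{e}}W(t) = \left.dE\right|_{V,\xi} + \left.dE\right|_S + d_{\text{e}}W(t) - (-\left.dE\right|_S) \ldots$; more transparently, $d_{\text{e}}Q(t) = dQ(t) - d_{\text{i}}Q(t) = \left.dE\right|_{V,\xi} - d_{\text{i}}W(t)$, so $d_{\text{e}}Q(t)$ borrows a piece of what is isentropic-in-origin (the $[P(t)-P_0]\,dV(t)$ level-shift part) while $d_{\text{e}}W(t)$ borrows a piece of what is isometric-in-origin. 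Thus neither $d_{\text{e}}Q$ nor $d_{\text{e}}W$ corresponds to either column of the decomposition~(\ref{Energy_Partition}): the division is scrambled.

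To make this airtight I would exhibit the obstruction concretely: whenever there is irreversibility ($P(t)\neq P_0$ with $dV(t)\neq0$, or $A(t)d\xi(t)>0$, or $T(t)\neq T_0$ with $dS(t)\neq0$), the quantity $d_{\text{i}}W(t)=d_{\text{i}}Q(t)$ is nonzero, and it appears with a $+$ sign in $d_{\text{e}}Q(t)=\left.dE\right|_{V,\xi}-d_{\text{i}}Q(t)$ viewed one way and with the opposite sign in $d_{\text{e}}W(t)=-\left.dE\right|_S-d_{\text{i}}W(t)$ viewed the other way, so the same irreversible term is split off of \emph{both} the traditional heat and the traditional work. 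Since that term is not expressible purely as $\sum_i E_i\,dp_i$ nor purely as $\sum_i p_i\,dE_i$ (it genuinely involves both, via Eq.~(\ref{Irreversible_Heat_0})), the traditional pair cannot be identified with the two sums in Theorem~\ref{Theorem_1}. For reversible processes all thermodynamic forces vanish, $d_{\text{i}}W = d_{\text{i}}Q = 0$, and the division is restored — consistent with the classical equilibrium result. The main obstacle is purely expository: one has to be careful that the microstate-level statements ($\left.dE\right|_S=\sum_i p_i\,dE_i$, etc.) are being compared against the macroscopic $d_{\text{e}}Q,d_{\text{e}}W$ defined only through the medium, so the cleanest route is to route everything through $d_{\text{i}}W(t)=d_{\text{i}}Q(t)$ and Eq.~(\ref{Irreversible_Heat_0})–(\ref{Irreversible-Work_0}) and simply observe that this common irreversible term is neither isentropic nor isometric.
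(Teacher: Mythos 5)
Your argument reaches the correct conclusion but by a genuinely different, more macroscopic route than the paper. The paper's proof is a one-line statistical computation: since $P_{0}$ is a constant, all the statistical content of $d_{\text{e}}W(t)=P_{0}dV(t)$ must sit in $dV(t)=\sum_{i}p_{i}dV_{i}+\sum_{i}V_{i}dp_{i}$, so $d_{\text{e}}W(t)$ visibly contains a $dp_{i}$ (probability-redistribution) piece alongside the isentropic piece $P_{0}\sum_{i}p_{i}dV_{i}$; that alone shows the traditional split cannot coincide with the two sums of Eq.\ (\ref{Energy_Partition}). You instead work through the macroscopic identities $d_{\text{e}}W=dW-d_{\text{i}}W$, $d_{\text{e}}Q=dQ-d_{\text{i}}Q$ and $d_{\text{i}}W=d_{\text{i}}Q$, and argue that the common irreversible term straddles the isentropic/isometric divide. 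That route is legitimate and arguably more physical --- it ties the failure of the division directly to the thermodynamic forces and makes manifest that the division is restored when they vanish --- but it leans on Theorem \ref{Theorem_Irreversible-Heat-Work} and Eqs.\ (\ref{Irreversible_Heat_0})--(\ref{Irreversible-Work_0}), none of which the paper's direct decomposition needs. One step of yours is imprecise: you assert that $d_{\text{i}}Q=d_{\text{i}}W$ is not expressible purely as $\sum_{i}E_{i}dp_{i}$, whereas Eq.\ (\ref{Statistical_Heat_Components}) gives exactly $d_{\text{i}}Q=\sum_{i}E_{i}d_{\text{i}}p_{i}$, a pure probability-redistribution term; consequently $d_{\text{e}}Q=\sum_{i}E_{i}d_{\text{e}}p_{i}$ retains a purely isometric \emph{form} (it merely fails to equal the full isometric contribution $\left.dE\right\vert _{V,\xi}$), and the genuine mixing occurs in $d_{\text{e}}W=-\left.dE\right\vert _{S}-\sum_{i}E_{i}d_{\text{i}}p_{i}$, which acquires an isometric piece. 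Stated that way, your argument lands on precisely the fact that the paper's decomposition of $dV(t)$ exhibits directly.
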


\begin{proof}
To prove the theorem, we focus on $d_{\text{e}}W(t)=P_{0}dV(t)$ and recognize
that $P_{0}$ is a constant. Hence, the statistical nature of $d_{\text{e}%
}W(t)$ must be contained in $dV(t)$. As $V(t)=\sum_{i}p_{i}V_{i}$, we have
\[
d_{\text{e}}W(t)=P_{0}(%
{\textstyle\sum\nolimits_{i}}
p_{i}dV_{i}+%
{\textstyle\sum\nolimits_{i}}
V_{i}dp_{i}).
\]
We observe that $d_{\text{e}}W(t)$ contains not only an isentropic
contribution, the first sum on the right, but also contains a contribution
containing $dp_{i}$. Thus, the clear separation between heat and work, as
appears in Theorem \ref{Theorem_1}, is not present in the traditional
formulation of the first law.
\end{proof}

\subsubsection{System in Internal Equilibrium}

It should be clear from the existence of non-zero thermodynamic forces for
irreversibility, that $P(t)\neq$ $P_{0}=\left(  \partial\widetilde{E}/\partial
V\right)  _{\widetilde{S},\widetilde{\xi}}$ except when a mechanical
equilibrium exists. While the instantaneous average pressure is defined under
all circumstances, it can only be identified with the thermodynamic definition
of the instantaneous pressure%
\begin{equation}
P(t)=-\left(  \partial E/\partial V\right)  _{S,\xi} \label{Pressure}%
\end{equation}
when the system is in internal equilibrium. Similarly, the instantaneous
average affinity $A(t)$ of the state has nothing to do with the affinity of
the medium ($A_{0}=0$), and can only be identified with its thermodynamic
definition $A(t)=-(\partial E/d\xi)_{S,V}$\ when the system is in internal
equilibrium. To proceed further, we need to impose the condition of internal
equilibrium, so that $p_{i}$\ has no explicit time-dependence. In this case,
not only the instantaneous pressure satisfies Eq. (\ref{Pressure}), but we
also have, following Eq. (\ref{dE_dQ}),
\begin{equation}
dQ(t)/dS(t)=\left(  \partial E/\partial S\right)  _{V,\xi}=T(t),
\label{System_dQ_dS}%
\end{equation}
which is a statistical proof of the thermodynamic identity in Eq.
(\ref{Def_dQ}) relating $dQ(t)$ and $dS(t)$. We also note that the ratio
$dQ(t)/dS(t)$ is related to a field variable of a (macro)state, the
instantaneous temperature of the system, while in the conventional approach,
the ratio $d_{\text{e}}Q(t)/d_{\text{e}}S(t)=T_{0}$ does not give a field
variable of the state.

It is clear from the above discussion that it is heat and not work that causes
$p_{i}(t)$, and therefore the entropy, to change without changing $E_{i}$.
This is the essence of the common wisdom that heat is \emph{random motion}.
But we now have a mathematical definition: heat is the isometric part of
$dE(t)$ that is directly related to the change in the entropy through changes
in $p_{i}(t)$. Work is that part of the energy change caused by isentropic
variations in the "mechanical" state variables $\mathbf{Z}_{E}(t)$. Thus, work
causes $E_{i}$ to change without changing $p_{i}(t)$. This is true no matter
how far the system is from equilibrium. Thus, our formulation of the first law
and the identification of the two terms is the most general one, and
applicable in all cases. \emph{The relationship between heat and entropy
becomes simple only when the system is also in internal equilibrium in which
case }$T(t)$\emph{ has a thermodynamic significance; see Eq.
(\ref{Thermodynamic-Fields-System}) and we have the thermodynamic identity in
Eq. (\ref{Def_dQ}) relating }$dQ(t)$\emph{ and }$dS(t)$\emph{. }

\subsection{Microstate probabilities}

In internal equilibrium, the entropy must be at its maximum at fixed
$E(t)=\sum_{i}E_{i}p_{i},V(t)=\sum_{i}V_{i}p_{i}$ and $\xi(t)=\sum_{i}\xi
_{i}p_{i}$, and is obtained by varying $p_{i}$ without changing the
microstates, i.e. $E_{i},V_{i}$ and $\xi_{i}$.\ This variation has nothing to
do with $dp_{i}$\ in a physical process. Using the Lagrange multiplier
technique, it is easy to show that the condition for this in terms of four
Lagrange multipliers whose definitions are obvious is
\begin{equation}
\eta_{i}=\lambda_{1}+\lambda_{2}E_{i}+\lambda_{3}V_{i}+\lambda_{4}\xi_{i},
\label{index_i}%
\end{equation}
from which follows $S=\lambda_{1}+\lambda_{2}E+\lambda_{3}V+\lambda_{4}\xi$.
It is now easy to identify $\lambda_{2}=\beta,\lambda_{3}=\beta P$ and
$\lambda_{4}=\beta A$ so we finally have
\begin{equation}
p_{i}(t)=\exp[\beta(t)(\widehat{G}(t)-E_{i}-P(t)V_{i}-A(t)\xi_{i})],
\label{microstate probability}%
\end{equation}
where $\lambda_{1}=\beta(t)\widehat{G}(t)$\ with $\widehat{G}(t)$ defined by
\[
\exp(-\beta(t)\widehat{G}(t))\equiv\sum_{i}\exp[-\beta(t)(E_{i}+P(t)V_{i}%
+A(t)\xi_{i})].
\]
The quantity $\widehat{G}(t)$ would represent the free energy of the system,
had it been in a medium $\widetilde{\Sigma}(T,P,A)$. However, as the system is
in a medium $\widetilde{\Sigma}(T_{0},P_{0},A_{0}=0),$ $\widehat{G}(t)$ does
not represent the free energy in this case; the correct free energy of the
system is the Gibbs free energy $G(T_{0},P_{0})=E(t)-T_{0}S(t)+P_{0}V(t)$; see
Ref. \cite{Gujrati-II} for more details. The microstate probability $p_{i}(t)$
in Eq. (\ref{microstate probability}) clearly shows the effect of
irreversibility and is very different from its equilibrium analog
$p_{i\text{,eq}}$
\[
p_{i\text{,eq}}=\exp[\beta_{0}(G(T_{0},P_{0})-E_{i}-P_{0}V_{i})].
\]

We now provide another demonstration that the two terms in Eq.
(\ref{Energy_Partition}) are identical to the generalized heat and work in the
following theorem.

\begin{theorem}
\label{Theorem_Work_Identification}The heat and work in Eqs. (\ref{Def_dQ})
and (\ref{Def_dW}) , and (\ref{Energy_Partition}) are the same.
\end{theorem}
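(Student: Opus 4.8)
The plan is to work directly from the explicit internal-equilibrium form of the microstate probability, Eq.~(\ref{microstate probability}), rather than re-tracing the Landau--Lifshitz manipulation used in Theorem~\ref{Theorem_Heat_Work}. From Eq.~(\ref{microstate probability}) together with $\eta_i=-\ln p_i$ of Eq.~(\ref{Uncertainty}) one reads off
\[
\eta_i(t)=\beta(t)\bigl(E_i+P(t)V_i+A(t)\xi_i-\widehat{G}(t)\bigr),
\]
so $\eta_i$ is an affine function of the microstate labels $E_i,V_i,\xi_i$. The only additional inputs are probability conservation $\sum_i dp_i=0$ from Eq.~(\ref{Entropy_Sum}) and the condition of internal equilibrium, which is what makes $p_i$ free of explicit time dependence and therefore makes the differentials $dp_i$ below coincide with those of the physical process.

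First I would dispose of the work term. Starting from the partition $dE=\sum_i E_i\,dp_i+\sum_i p_i\,dE_i$ of Theorem~\ref{Theorem_1}, the isentropic sum $\left.dE\right|_S=\sum_i p_i\,dE_i$ involves only the microstate energies $E_i[V(t),\xi(t)]$, so $dE_i=-P_i\,dV-A_i\,d\xi$ with $P_i,A_i$ the microstate pressure and affinity introduced in Theorem~\ref{Theorem_Heat_Work}. Averaging with $p_i$ and using the definitions Eq.~(\ref{Statistical Fields}) of $P(t),A(t)$ gives $\left.dE\right|_S=-P(t)\,dV(t)-A(t)\,d\xi(t)$, i.e.\ $-\left.dE\right|_S=P(t)\,dV(t)+A(t)\,d\xi(t)$, which is precisely $dW(t)$ of Eq.~(\ref{Def_dW}).

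For the heat term I would differentiate the Gibbs entropy $S=\sum_i p_i\eta_i$; since $p_i\,d\eta_i=-dp_i$, probability conservation annihilates one of the two resulting sums and leaves $dS=\sum_i\eta_i\,dp_i$. Inserting the affine expression for $\eta_i$, again using $\sum_i dp_i=0$ to drop the $\widehat{G}$ term, and using that $V_i,\xi_i$ equal their macrostate values $V(t),\xi(t)$ in every microstate so that $\sum_i V_i\,dp_i=\sum_i\xi_i\,dp_i=0$, one gets $T(t)\,dS(t)=\sum_i E_i\,dp_i$. The right-hand side is exactly the isometric contribution $\left.dE\right|_{V,\xi}$, so $dQ(t)\equiv\left.dE\right|_{V,\xi}=\sum_i E_i\,dp_i=T(t)\,dS(t)$, which is Eq.~(\ref{Def_dQ}); adding the two pieces reproduces the Gibbs fundamental relation Eq.~(\ref{Gibbs_Fundamental_Relations}), so the energy partition Eq.~(\ref{Energy_Partition}) is literally the split of $dE$ into $dQ$ and $-dW$, and all three characterizations agree.

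The step I expect to need the most care is the heat term: one must argue that the $V_i,\xi_i$ multiplying $dp_i$ drop out --- equivalently, that within a single microstate the observable $V$ and the internal variable $\xi$ take their macrostate values --- and one must use internal equilibrium both to have Eq.~(\ref{microstate probability}) at all and to guarantee $\eta_i$ is linear in the state labels. As an independent check one can avoid Eq.~(\ref{microstate probability}) entirely: $\sum_i E_i\,dp_i=dE-\sum_i p_i\,dE_i=dE+dW$, and its equality with $T\,dS$ then follows at once from Eq.~(\ref{Gibbs_Fundamental_Relations}); agreement of this route with the statistical one is the substance of the theorem.
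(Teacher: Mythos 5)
Your identification of the work term is fine, but it is essentially a restatement of Theorem \ref{Theorem_Heat_Work} in the picture where $V$ and $\xi$ are common parameters of all microstates. The new content of Theorem \ref{Theorem_Work_Identification} in the paper is precisely to reconcile the isentropic form with $dW(t)=P(t)dV(t)+A(t)d\xi(t)$ when $V(t)=\sum_ip_iV_i$ and $\xi(t)=\sum_ip_i\xi_i$ are averages over microstates carrying \emph{different} $V_i,\xi_i$, so that $dV(t)=\sum_iV_i\,dp_i+\sum_ip_i\,dV_i$ acquires a $dp_i$ piece. The paper kills that piece with the identity (\ref{Work_Identity}), $P(t)\sum_iV_i\,dp_i+A(t)\sum_i\xi_i\,dp_i=0$, obtained by inserting Eq. (\ref{index_i}) into $\sum_i(\eta_i-\beta E_i)\,dp_i=0$ --- the latter being the already-established relation $dQ=T\,dS$ of Eq. (\ref{System_dQ_dS}) --- together with $\sum_idp_i=0$.

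The genuine gap is in your heat term. You drop $\sum_iV_i\,dp_i$ and $\sum_i\xi_i\,dp_i$ by asserting that $V_i,\xi_i$ ``equal their macrostate values in every microstate.'' That is inconsistent with the framework you are invoking: the sum over $i$ runs over microstates with all possible values of $V$ and $\xi$ \cite{note-4}, and the Lagrange-multiplier derivation of Eq. (\ref{microstate probability}) treats $V(t)=\sum_iV_ip_i$ and $\xi(t)=\sum_i\xi_ip_i$ as nontrivial constraints; if $V_i$ and $\xi_i$ were microstate-independent, the terms $P(t)V_i$ and $A(t)\xi_i$ in $p_i$ would be absorbable constants and the affine structure you rely on would be vacuous. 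Only the particular combination $P(t)\sum_iV_i\,dp_i+A(t)\sum_i\xi_i\,dp_i$ vanishes, not each sum separately, and that combination is exactly Eq. (\ref{Work_Identity}), which is not free --- it encodes $dQ=TdS$. Your ``independent check'' is in fact the sound route: once $dW(t)$ of Eq. (\ref{Def_dW}) is identified with $-\left.dE\right\vert _{S}$, the first law $dE=dQ-dW$ and the partition (\ref{Energy_Partition}) force $dQ=\sum_iE_i\,dp_i=\left.dE\right\vert _{V,\xi}$, whose equality with $T(t)dS(t)$ is Eq. (\ref{System_dQ_dS}). Promote that check to the argument and delete the claim $\sum_iV_i\,dp_i=\sum_i\xi_i\,dp_i=0$.
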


\begin{proof}
We first note the identity%
\[%
{\textstyle\sum\nolimits_{i}}
(\eta_{i}-\beta(t)E_{i})dp_{i}=0,
\]
which is nothing but the identity $dQ(t)=T(t)dS(t).$ Using this identity, we
obtain another identity that follows from Eq. (\ref{index_i})
\begin{equation}
P(t)%
{\textstyle\sum\nolimits_{i}}
V_{i}dp_{i}(t)+A(t)%
{\textstyle\sum\nolimits_{i}}
\xi_{i}dp_{i}(t)=0, \label{Work_Identity}%
\end{equation}
where being related to the Lagrange multipliers, $P(t),A(t)$, etc. must be
treated as parameters. We\ now rewrite $dV(t)=%
{\textstyle\sum\nolimits_{i}}
V_{i}(t)dp_{i}(t)+%
{\textstyle\sum\nolimits_{i}}
p_{i}(t)dV_{i}(t)$ and $d\xi(t)=%
{\textstyle\sum\nolimits_{i}}
\xi_{i}(t)dp_{i}(t)+%
{\textstyle\sum\nolimits_{i}}
p_{i}(t)d\xi_{i}(t)$ in $dW(t)=P(t)dV(t)+A(t)d\xi(t)$ and use the above
identity in Eq. (\ref{Work_Identity}) to establish that the work in Eqs.
(\ref{Def_dW}) reduces to the isentropic form $dW(t)=P(t)%
{\textstyle\sum\nolimits_{i}}
p_{i}(t)dV_{i}+A(t)%
{\textstyle\sum\nolimits_{i}}
p_{i}(t)d\xi_{i}(t)$. As $P(t)$ and $A(t)$ are parameters, the statistical
nature of $dW(t)$ appears in $dV(t)$ and $d\xi(t)$. As the work expressions
are identical, the demonstration also proves that the thermodynamic parameters
in Eq. (\ref{microstate probability}) and the statistical fields in Eq.
(\ref{Statistical Fields}) are the same.
\end{proof}

This proves the two formulations of work to be identical. Apart from providing
a check for the internal consistency in our approach, this also proves that
the first term in Eq. (\ref{Energy_Partition}) is identical to the heat in Eq.
(\ref{Def_dQ}). At equilibrium, $A(t)\rightarrow A_{0}=0$ and the second term
in Eq. (\ref{Work_Identity}) vanishes. As $P(t)\rightarrow P_{0}$ at
equilibrium, we conclude that $d_{\text{e}}W(t)=P_{0}%
{\textstyle\sum\nolimits_{i}}
p_{i}dV_{i}$, see Theorem \ref{Theorem_Traditional_Heat_Work}, which makes
$d_{\text{e}}W(t)$ purely isentropic, as expected.

Let us compute the generalized pressure work $W(t)$ done by the system over
some time interval $t$ at \emph{fixed} $p_{i}(t)$\ as $V(t)$ [$\overset{.}%
{V}(t)\equiv dV(t)/dt$] changes from its initial value. It is given by the
integral%
\begin{equation}
W_{\text{V}}(t)=%
{\textstyle\sum\nolimits_{i}}
p_{i}(t)\int_{0}^{t}p_{i}(t)P_{i}(t)\overset{.}{V}(t)dt\equiv\int_{0}%
^{t}P(t)\overset{.}{V}(t)dt \label{Work_system}%
\end{equation}
and differs from the negative of the work $\widetilde{W}(t)=P_{0}%
\Delta\widetilde{V}(t)$ (obtained by replacing all the quantities in Eq.
(\ref{Work_system}) by their medium analogs) done by the medium on the system
by the dissipative contribution. This result should contrasted with the result
obtained by others \cite{note-2}. They become identical if and only if the
average pressure $P(t)$ is \emph{identical} with $P_{0}$.

\subsection{Medium}

The above discussion can be easily extended to the medium (the suffix
$\widetilde{\alpha}$\ denotes its microstates) with the following results%
\begin{align*}
d\widetilde{W}(t)  &  =-\left.  d\widetilde{E}\right\vert _{\widetilde{S}%
}\equiv-%
{\textstyle\sum\nolimits_{\widetilde{\alpha}}}
\widetilde{p}_{\widetilde{\alpha}}\frac{\partial\widetilde{E}_{\widetilde
{\alpha}}}{\partial\widetilde{V}}d\widetilde{V}=P_{0}d\widetilde{V},\\
d\widetilde{Q}(t)  &  =\left.  d\widetilde{E}\right\vert _{\widetilde{V}%
}\equiv%
{\textstyle\sum\nolimits_{\widetilde{\alpha}}}
\widetilde{E}_{\widetilde{\alpha}}d\widetilde{p}_{\widetilde{\alpha}}\equiv
d\widetilde{Q}\equiv-d_{\text{e}}Q,
\end{align*}
where all the quantities refer to the medium, except $d_{\text{e}}Q,$ and have
their standard meaning. The analog of Eq. (\ref{System_dQ_dS}) is%
\begin{equation}
d\widetilde{Q/}d\widetilde{S}=\left(  \partial\widetilde{E}/\partial
\widetilde{S}\right)  _{\widetilde{V},\widetilde{\xi}}=T_{0},
\label{Medium_dQ_dS}%
\end{equation}
as expected. We clearly see that
\[
dW+d\widetilde{W}\neq0
\]
such as when mechanical equilibrium is not present. In this case, we also have%
\[
dQ+d\widetilde{Q}\neq0.
\]

\subsection{Irreversible Work and Heat \ \ }

We can now identify $d_{\text{i}}W(t)$ and $d_{\text{i}}Q(t):$%
\begin{align*}
d_{\text{i}}W(t)  &  \equiv-(\left.  dE\right\vert _{S}+\left.  d\widetilde
{E}\right\vert _{\widetilde{S}})\\
d_{\text{i}}Q(t)  &  \equiv(\left.  dE\right\vert _{V,\xi}+\left.
d\widetilde{E}\right\vert _{\widetilde{V},\widetilde{\xi}}),
\end{align*}
with $d_{\text{i}}Q(t)\equiv d_{\text{i}}W(t)$ from Theorem
\ref{Theorem_Irreversible-Heat-Work}. Because of the equality, we only needs
to compute one of them, which we take to be $d_{\text{i}}W(t)$, merely because
it only involves adiabatic quantities. However, the computation of
irreversible work requires considering both the system and the (working part
of the) medium, which makes their computation quite unfeasible in many
situations because of the very large size of the medium, unless the equations
of state of the medium are known. On the other hand, the determination of
$dW(t)$ and $dQ(t)$ is a much easier task computationally as we only deal with
the system, a point made several times in the literature; see, for example,
Jarzynski \cite{Jarzynski}.

\subsection{External and Internal Variations of $dp_{i}(t)$%
\label{Sec_External_Internal_Variations}}

Let us introduce the natural partition as in Eq. (\ref{Generalized-Heat-Work})
for $dp_{i}$:%
\[
dp_{i}(t)=d_{\text{e}}p_{i}(t)+d_{\text{i}}p_{i}(t);
\]
$d_{\text{e}}p_{i}(t)$ is the change due to exchanges with the medium and
$d_{\text{i}}p_{i}(t)$ the change due to internal dissipation $d_{\text{i}}Q$.
Then we have%
\begin{equation}
d_{\text{e}}Q(t)\equiv%
{\textstyle\sum\nolimits_{i}}
E_{i}d_{\text{e}}p_{i}(t),\ d_{\text{i}}Q(t)\equiv%
{\textstyle\sum\nolimits_{i}}
E_{i}d_{\text{i}}p_{i}(t).\ \label{Statistical_Heat_Components}%
\end{equation}
We similarly have%
\begin{equation}
d_{\text{e}}S(t)\equiv%
{\textstyle\sum\nolimits_{i}}
\eta_{i}d_{\text{e}}p_{i}(t),\ d_{\text{i}}S(t)\equiv%
{\textstyle\sum\nolimits_{i}}
\eta_{i}d_{\text{i}}p_{i}(t).\ \label{Statistical_Entropy_Components}%
\end{equation}
We can recast Eq. (\ref{Irreversible_Heat_0})
\[%
{\textstyle\sum\nolimits_{i}}
(E_{i}-T_{0}\eta_{i})d_{\text{i}}p_{i}=(T(t)-T_{0})%
{\textstyle\sum\nolimits_{i}}
\eta_{i}dp_{i}%
\]
that acts as a constraint on possible variations $d_{\text{i}}p_{i}$.\ The
relation $d_{\text{e}}Q(t)=T_{0}d_{\text{e}}S(t)$ can be expressed in terms of
$d_{\text{e}}p_{i}(t)$%
\[%
{\textstyle\sum\nolimits_{i}}
(\eta_{i}-\beta_{0}E_{i})d_{\text{e}}p_{i}=0.
\]

We now prove the following theorem.

\begin{theorem}
\label{Theorem_Wi_Qi_Equivalence}The identity in Eq.
(\ref{Irreversible_Heat_Work_equality}) is a consequence of the vanishing of
average change in microstate uncertainty
\[%
{\textstyle\sum\nolimits_{i}}
p_{i}d\eta_{i}=%
{\textstyle\sum\nolimits_{i}}
dp_{i}=0,
\]
which is nothing but the conservation of probability.
\end{theorem}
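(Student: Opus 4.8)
The plan is to trace the identity $d_{\text{i}}Q(t)\equiv d_{\text{i}}W(t)$ back to the single structural fact that probability is only redistributed, never created or destroyed, even when individual microstates emerge or disappear as the macrostate evolves. First I would record the two elementary identities that make up the displayed equation. From $\sum_i p_i(t)\equiv1$ at every instant one gets $\sum_i dp_i(t)=0$; and since $\eta_i=-\ln p_i(t)$ gives $p_i\,d\eta_i=-dp_i$, summing yields $\sum_i p_i\,d\eta_i=-\sum_i dp_i=0$. Two immediate corollaries, both used below, are that $dS(t)=d\!\left(\sum_i p_i\eta_i\right)=\sum_i\eta_i\,dp_i+\sum_i p_i\,d\eta_i=\sum_i\eta_i\,dp_i$, and that the exchange and internal parts each conserve probability on their own, $\sum_i d_{\text{e}}p_i=\sum_i d_{\text{i}}p_i=0$, the first because $d_{\text{e}}p_i$ is the net result of interaction with the (normalized) medium and the second because $d_{\text{i}}p_i$ is a pure internal reshuffling among the system's own microstates.

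Next I would apply this to the isolated system $\Sigma_0$, letting $j$ label its microstates. Since there is no exchange for $\Sigma_0$, every change is internal, $dp_j=d_{\text{i}}p_j$, and $\sum_j dp_j=0$. Combining the statistical identifications of heat and work from Theorems~\ref{Theorem_1} and~\ref{Theorem_Heat_Work}, $dQ\equiv\left.dE\right|_{V,\xi}=\sum_i E_i\,dp_i$ and $dW\equiv-\left.dE\right|_S=-\sum_i p_i\,dE_i$, with $d_{\text{i}}Q(t)=d_{\text{i}}Q_0(t)=dQ_0$ and $d_{\text{i}}W(t)=d_{\text{i}}W_0(t)=dW_0$ from the proof of Theorem~\ref{Theorem_Irreversible-Heat-Work}, I would write
\[
d_{\text{i}}Q(t)-d_{\text{i}}W(t)=dQ_0-dW_0=\sum_j E_j\,dp_j+\sum_j p_j\,dE_j=\sum_j d\!\left(E_jp_j\right)=dE_0=0 ,
\]
which proves Eq.~(\ref{Irreversible_Heat_Work_equality}). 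I would stress the point: the two sums in the middle are exactly the heat $d_{\text{i}}Q$ and the negative of the work $d_{\text{i}}W$, so what collapses them is precisely that the probability lost by some microstates equals the probability gained by the others; because $\sum_j dp_j=0$ the origin of energy is irrelevant and the identity survives the appearance and disappearance of microstates.

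As a check that probability conservation, and not some accident of the dynamics, is doing the work, I would redo the count for $\Sigma$ alone in internal equilibrium. Inserting $E_i=\widehat G(t)+T(t)\eta_i-P(t)V_i-A(t)\xi_i$, read off from Eq.~(\ref{microstate probability}), into $d_{\text{i}}Q(t)=\sum_i E_i\,d_{\text{i}}p_i$ gives
\[
d_{\text{i}}Q(t)=\widehat G(t)\!\sum_i d_{\text{i}}p_i+T(t)\!\sum_i\eta_i\,d_{\text{i}}p_i-P(t)\!\sum_i V_i\,d_{\text{i}}p_i-A(t)\!\sum_i\xi_i\,d_{\text{i}}p_i ,
\]
where the free-energy term drops only because $\sum_i d_{\text{i}}p_i=0$, the second term is $T(t)d_{\text{i}}S(t)$, and the remaining two are to be reassembled, using $d_{\text{e}}Q=T_0d_{\text{e}}S$ and $dQ=T\,dS$ together with $dV=\sum_i V_i\,dp_i+\sum_i p_i\,dV_i$ and the analogous split of $d\xi$, into $[P(t)-P_0]dV(t)+A(t)d\xi(t)$, i.e.\ into $d_{\text{i}}W(t)$ of Eq.~(\ref{Irreversible-Work_0}). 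I expect this reassembly to be the only real obstacle: one must keep the ``$dp_i$-parts'' and ``$dE_i$-parts'' of $dV$ and $d\xi$ scrupulously apart and verify that every term not manifestly annihilated by $\sum_i dp_i=0$ cancels, so that the final equality rests on probability conservation alone.
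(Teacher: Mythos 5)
Your central chain of equalities establishes the identity $d_{\text{i}}Q(t)\equiv d_{\text{i}}W(t)$, but it does not establish this theorem. The load-bearing step in $\sum_j E_j\,dp_j+\sum_j p_j\,dE_j=\sum_j d(E_jp_j)=dE_0=0$ is \emph{energy} conservation for the isolated system, $dE_0=0$; the probability conservation $\sum_j dp_j=0$ that you then credit (``what collapses them is precisely that the probability lost\ldots equals the probability gained'') never actually enters that computation --- it would only force $\sum_j E_j\,dp_j=0$ if all the $E_j$ were equal. What you have written is a statistical paraphrase of the already-proven Theorem \ref{Theorem_Irreversible-Heat-Work}, whereas the content of Theorem \ref{Theorem_Wi_Qi_Equivalence} is the sharper claim that the identity can be traced specifically to $\sum_i p_i\,d\eta_i=\sum_i dp_i=0$. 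The paper's proof does this by staying with $\Sigma$ alone: it writes $d_{\text{i}}Q=\sum_i E_i\,d_{\text{i}}p_i$ and, using Eq.~(\ref{Work_Identity}), $d_{\text{i}}W=P(t)\sum_i p_i\,dV_i+A(t)\sum_i p_i\,d\xi_i-P_0dV$, eliminates $P_0dV$ via the traditional first law, and reduces the asserted equality to $\sum_i p_i\left[dE_i+P(t)dV_i+A(t)d\xi_i\right]=0$, which by Eq.~(\ref{index_i}) is $T(t)\sum_i p_i\,d\eta_i=0$, i.e.\ the stated conservation of probability.

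Your third paragraph is the part that actually points in the right direction (substituting $E_i$ from Eq.~(\ref{microstate probability}) into $\sum_i E_i\,d_{\text{i}}p_i$ and isolating the term killed by $\sum_i d_{\text{i}}p_i=0$), but you explicitly leave the ``reassembly'' of the remaining $V_i$- and $\xi_i$-sums into $d_{\text{i}}W(t)$ of Eq.~(\ref{Irreversible-Work_0}) as an acknowledged obstacle rather than carrying it out, so the proposal does not close the argument. To finish along your lines you would need Eq.~(\ref{Work_Identity}) together with the traditional first law (equivalently $d_{\text{e}}Q(t)=T_0d_{\text{e}}S(t)$) to convert $-P(t)\sum_i V_i\,d_{\text{i}}p_i-A(t)\sum_i\xi_i\,d_{\text{i}}p_i$ into $[P(t)-P_0]dV(t)+A(t)d\xi(t)-T(t)d_{\text{i}}S(t)-(T(t)-T_0)d_{\text{e}}S(t)$ and verify the cancellation against Eq.~(\ref{Irreversible_Heat_02}); that is essentially the paper's calculation run in the opposite direction, and until it is done the theorem as stated remains unproved.
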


\begin{proof}
We express Eq. (\ref{Irreversible_Heat_Work_equality}) in terms of microstate
probabilities and use Eq. (\ref{Work_Identity}) to obtain%
\[%
{\textstyle\sum\nolimits_{i}}
E_{i}d_{\text{i}}p_{i}(t)=P(t)%
{\textstyle\sum\nolimits_{i}}
p_{i}(t)dV_{i}+A(t)%
{\textstyle\sum\nolimits_{i}}
p_{i}(t)d\xi_{i}(t)-P_{0}dV.
\]
We eliminate the last term using the traditional formulation of the first law
to obtain%
\[%
{\textstyle\sum\nolimits_{i}}
p_{i}(t)\left[  dE_{i}(t)+P(t)dV_{i}+A(t)d\xi_{i}(t)\right]  =0.
\]
In view of Eq. (\ref{index_i}), this is nothing but $%
{\textstyle\sum\nolimits_{i}}
p_{i}d\eta_{i}=0$. This proves the theorem.
\end{proof}

\subsection{The Adiabatic Theorem}

We now have a clear statement of the generalization of the adiabatic theorem
\cite{Landau} for nonequilibrium processes. \emph{An adiabatic nonequilibrium
process is an isentropic process. } Such a process also includes the
stationary limit, i.e. the steady state of a non-equilibrium process. However,
the extension goes beyond the conventional notion of an adiabatic process
commonly dealt with in equilibrium statistical mechanics according to which an
adiabatic process is one for which $d_{\text{i}}S=0$, and represents a
reversible process in a thermally isolated system so that $d_{\text{e}}%
Q(t)=0$. One can also have $dS=0$ in an irreversible process during which%
\begin{equation}
d_{\text{i}}S(t)=-d_{\text{e}}S(t)>0;\label{Stationary_S}%
\end{equation}
as usual, Eq. (\ref{Irreversible_Heat_Work_equality}) always remains
satisfied. If the system remains in internal equilibrium, which may not hold
for a stationary state, then we must also have%
\begin{equation}
d_{\text{i}}Q(t)=-d_{\text{e}}Q(t)=T_{0}d_{\text{i}}%
S(t)>0.\label{Stationary _Q}%
\end{equation}
In the following, we will not consider a stationary nonequilibrium state, a
case that would be taken up elsewhere.

\begin{theorem}
\label{Adiabatic Theorem} In an adiabatic process, the sets of microstates and
of their probabilities $p_{i}$ do not change, but $d_{\text{e}}p_{i}%
=-d_{\text{i}}p_{i}\neq0$ for all $i$.
\end{theorem}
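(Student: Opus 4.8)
The plan is to use the characterization established just above --- that an adiabatic process is isentropic, $dS(t)=0$ --- and to read off its consequences in turn for the microstate count $W(t)$, for the probabilities $p_i(t)$, and for the partition $dp_i(t)=d_{\text{e}}p_i(t)+d_{\text{i}}p_i(t)$ of Sec.~\ref{Sec_External_Internal_Variations}; throughout I assume, as in the paragraph preceding the statement, that the system stays in internal equilibrium and that the process is not stationary. The first step is to freeze the set of microstates: since $S(t)=\ln W(t)$ in internal equilibrium (Eq.~(\ref{Internall_Eq_S})), $dS(t)=0$ forces $dW(t)=0$, and --- as noted right after Eq.~(\ref{Uncertainty}) --- it is precisely a change of entropy, hence of $W(t)$, that makes microstates appear or disappear, so $dW(t)=0$ leaves the index set $\{i\}$ unchanged. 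Because the microstates counted by $W(t)$ are equally probable, $p_i(t)=1/W(t)$, and $W(t)$ is fixed, every $p_i(t)$ is fixed: $dp_i(t)=0$ for all $i$. I would cross-check this against Theorem~\ref{Theorem_Heat_Work}: the isometric (heat) part of the energy change is $dQ(t)=\sum_i E_i\,dp_i(t)=T(t)\,dS(t)=0$, consistent with there being no net reshuffling of probability, while the isentropic (work) part never involved $dp_i$. This first assertion is the nonequilibrium counterpart of the quantum adiabatic theorem of Ref.~\cite{Landau} --- in a slow process the occupation of each microstate is preserved --- and, since $S=-\sum_i p_i\ln p_i$, it is exactly why ``adiabatic'' and ``isentropic'' coincide here.

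With $dp_i(t)=0$ in hand, the partition gives at once $d_{\text{e}}p_i(t)=-d_{\text{i}}p_i(t)$ for every $i$, so it only remains to show that this common quantity does not vanish when the adiabatic process is genuinely irreversible. Here I would invoke Eq.~(\ref{Stationary_S}), $d_{\text{i}}S(t)=-d_{\text{e}}S(t)>0$, whence in internal equilibrium $d_{\text{i}}Q(t)=-d_{\text{e}}Q(t)=T_0\,d_{\text{i}}S(t)>0$ (Eq.~(\ref{Stationary _Q})). Combining this with Eqs.~(\ref{Statistical_Heat_Components}) and (\ref{Statistical_Entropy_Components}) yields $\sum_i E_i\,d_{\text{i}}p_i(t)>0$ and $\sum_i\eta_i\,d_{\text{i}}p_i(t)>0$, so the $d_{\text{i}}p_i(t)$ --- and hence the $d_{\text{e}}p_i(t)$ --- are not all zero: the internal dissipation reshuffles probability and the exchange with the medium reshuffles it back, leaving each $p_i$ invariant.

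The step I expect to be the real obstacle is upgrading ``not all zero'' to ``nonzero for \emph{every} $i$,'' since the heat and entropy balances above constrain only weighted sums of the $d_{\text{i}}p_i(t)$. To close that gap I would either differentiate the internal-equilibrium form $p_i(t)=\exp[\beta(t)(\widehat{G}(t)-E_i-P(t)V_i-A(t)\xi_i)]$ of Eq.~(\ref{microstate probability}), isolate the part generated by the internal relaxation from the part generated by the exchange with the medium, and argue that the former is generically nonzero microstate by microstate; or else add a mild hypothesis on the dissipative dynamics --- that the internal relaxation couples to the whole microstate space rather than leaving some microstate dynamically inert --- under which $d_{\text{i}}p_i(t)\neq 0$ for each $i$ is manifest. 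The remaining chain, $dS(t)=0\Rightarrow dW(t)=0\Rightarrow dp_i(t)=0\Rightarrow d_{\text{e}}p_i(t)=-d_{\text{i}}p_i(t)$, is routine.
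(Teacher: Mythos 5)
Your proposal reaches the paper's conclusions and much of it runs parallel to the paper's own proof, but the two arguments diverge at the first, crucial step, and the divergence matters. The paper obtains $dp_{i}=0$ for all $i$ directly from the definition of work: by Theorem \ref{Theorem_Heat_Work}, work is the isentropic contribution to $dE$ in Eq.~(\ref{Energy_Partition}), i.e.\ the part computed at fixed $p_{i}$, and an adiabatic process is by construction one in which only this contribution is present; hence the $p_{i}$ do not change. What you use as your primary route --- $dS=0$ forces the microstate count $W(t)$ to be constant, then $p_{i}=1/W(t)$ is fixed --- is the weaker leg of your argument: the equiprobability statement of Eq.~(\ref{Internall_Eq_S}) applies to microstates sharing the same $E,V,\xi$, whereas the index $i$ in the statistical sections runs over microstates with differing $E_{i},V_{i},\xi_{i}$ (see the paper's footnote on classifying microstates by $V$ and $\xi$), and for those the probabilities are the manifestly non-uniform exponentials of Eq.~(\ref{microstate probability}). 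Indeed, if all $p_{i}$ really equalled $1/W(t)$, the normalization $\sum_{i}dp_{i}=0$ would force $W(t)$ to be constant in \emph{every} process, not just adiabatic ones. So you should promote what you call a ``cross-check'' via Theorem \ref{Theorem_Heat_Work} to the main argument; that is precisely what the paper does. Your observation that a frozen set of probabilities means no microstate can appear or disappear matches the paper's closing remark.

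On the second half, your worry about upgrading ``not all $d_{\text{i}}p_{i}$ vanish'' to ``$d_{\text{i}}p_{i}\neq0$ for every $i$'' is well founded --- the balances $\sum_{i}\eta_{i}d_{\text{i}}p_{i}>0$ and $\sum_{i}E_{i}d_{\text{i}}p_{i}>0$ constrain only weighted sums --- but you should know that the paper's own proof makes exactly the same jump with no further argument: it states that since $d_{\text{i}}S(t)$ does not vanish, $d_{\text{i}}p_{i}(t)$ cannot vanish, and then writes $d_{\text{e}}p_{i}=-d_{\text{i}}p_{i}\neq0$ for all $i$. Your proposed remedies (differentiating Eq.~(\ref{microstate probability}) and separating the exchange and internal parts, or assuming the dissipative dynamics leaves no microstate inert) go beyond what the paper supplies and are what would be needed to make the ``for all $i$'' clause literally, rather than generically, true.
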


\begin{proof}
In terms $d_{\text{e}}p_{i}$ and $d_{\text{i}}p_{i}$, Eqs. (\ref{Stationary_S}%
) and (\ref{Stationary _Q}) become
\begin{subequations}
\label{Adiabatic_entropy_heat}%
\begin{align}
\sum_{i}\eta_{i}d_{\text{i}}p_{i} &  =-\sum_{i}\eta_{i}d_{\text{e}}%
p_{i},\label{adiabatic_entropy}\\
\sum_{i}E_{i}d_{\text{i}}p_{i} &  =-\sum_{i}E_{i}d_{\text{e}}p_{i}%
.\label{adiabatic_heat}%
\end{align}
Recognizing that there is only work in the defining relation in
Eq.(\ref{Energy_Partition}), which requires $p_{i}$ not to change, we conclude
that
\end{subequations}
\[
dp_{i}=0\text{ \ for }\forall i
\]
in an adiabatic process. As $d_{\text{i}}S(t)$ does not vanish in an
irreversible process, $d_{\text{i}}p_{i}(t)$ cannot vanish. Accordingly,
$d_{\text{e}}p_{i}=-d_{\text{i}}p_{i}\neq0$ for an irreversible adiabatic
process. As $p_{i}$'s do not change, no microstate can appear or disappear.
This proves the theorem.
\end{proof}

\subsection{Ideal Gas:\ An Illustrative Example\label{Sect_Ideal_Gas}}

Consider, as an example, an ideal gas in a cuboid with its axis along the
$x$-axis. Its length $a(t)$ along the $x$-axis is controlled by a frictionless
piston at one end. All other walls are assumed rigid. Let the dimensions $b$
and $c$, respectively, along the $y$- and $z$- axes remain constant.\ As the
particles are ideal, we can focus on single particle energy levels given by
\[
E_{\mathbf{n}}=\frac{\hbar^{2}\pi^{2}}{2m}(\frac{n_{1}^{2}}{a^{2}(t)}%
+\frac{n_{2}^{2}}{b^{2}}+\frac{n_{3}^{2}}{c^{2}}),
\]
with the corresponding wavefunction given by
\[
\phi_{\mathbf{n}}(\mathbf{r}(t))=\sqrt{\frac{8}{a(t)bc}}\sin(\frac{n_{1}}{\pi
a(t)}x)\sin(\frac{n_{2}}{\pi b}y)\sin(\frac{n_{3}}{\pi c}z),
\]
which form the normalized complete basis set. Let $p_{\mathbf{n}}(t)$ denote
the probability of having gas particles in a given eigenstate $\mathbf{n}$.
Then the energy per particle of the gas is given by
\[
E(t)=\sum_{\mathbf{n}}p_{\mathbf{n}}(t)E_{\mathbf{n}}(t),
\]
and the pressure on the moving piston is given by
\[
P(t)=\frac{1}{bc}(-\frac{dE}{da(t)})=\frac{\hbar^{2}\pi^{2}}{mV(t)}%
\sum_{\mathbf{n}}p_{\mathbf{n}}(t)\frac{n_{1}^{2}}{a^{2}(t)}=\frac{2}%
{V(t)}(E(t)-E_{0}),
\]
where
\[
E_{0}=\sum_{\mathbf{n}}p_{\mathbf{n}}(t)\frac{\hbar^{2}\pi^{2}}{2m}%
(\frac{n_{2}^{2}}{b^{2}}+\frac{n_{3}^{2}}{c^{2}}).
\]
The probabilities and the energy eigenvalues change in time as a function of
$a(t)$. The change in the energy\ $E(t)$ comes from changes in $p_{\mathbf{n}%
}(t)$ and from the variations in $a(t)$. The two contributions determine heat
and work, respectively.

If the system of ideal gas is isolated, its energy, volume and the number of
particles remain constant. If the gas is originally not in equilibrium, it
will eventually reach equilibrium in which its entropy must increase. This
requires the introduction of some internal variables even in this system whose
variation will give rise to entropy generation by causing internal variations
$d_{\text{i}}p_{\mathbf{n}}(t)$ in $p_{\mathbf{n}}(t)$. Here, we will assume a
single internal variable $\xi(t)$. Its real nature is not relevant for our
discussion. What is relevant is that the variation in $\xi(t)$ is accompanied
by changes $d_{\text{i}}p_{\mathbf{n}}(t)$ occurring within the isolated
system. According to our identification of heat with changes in $p_{\mathbf{n}%
}(t)$, these variations must be associated with heat, which in this case will
be associated with irreversible heat $d_{\text{i}}Q(t)$.

\subsubsection{Isothermal Expansion}

Let us first consider an isothermal expansion of this gas so that the
temperature of the gas remains constant and equal to that of the medium
$T_{0}$. During expansion, energy is pumped into the gas isothermally from
outside so that not only $E(t)$ remains constant, but also keeps $P(t)V(t)$.
The pumping of energy will result in the change $d_{\text{e}}p_{\mathbf{n}%
}(t)$. This will determine $d_{\text{e}}Q(t)=T_{0}d_{\text{e}}S(t)$. In
addition, particles may undergo transitions among various energy levels, as
discussed above, without any external energy input, which will determine the
change $d_{\text{i}}p_{\mathbf{n}}(t)$. This will determine $d_{\text{i}%
}Q(t)=T_{0}d_{\text{i}}S(t)$, and consequently $d_{\text{i}}W(t)$. Thus,%
\[
\lbrack P(t)-P_{0}]dV(t)+A(t)d\xi(t)=T_{0}dS(t)-d_{\text{e}}Q(t),
\]
which allows us to determine the irreversible work in terms of measurable
quantities. Such a calculation will not be possible in the traditional
formulation of the first law.

\subsubsection{Adiabatic Expansion}

In a nonequilibrium adiabatic process, we have $d_{\text{i}}W(t)=-$
$d_{\text{e}}Q(t)$ so the heat exchange $\left\vert d_{\text{e}}%
Q(t)\right\vert =$ $T_{0}\left\vert d_{\text{e}}S(t)\right\vert $ is converted
into the irreversible work in this process. We can use this to determine the
work $d_{\text{i}}W_{\xi}(t)$ due to the internal variable%
\[
A(t)d\xi(t)=-d_{\text{e}}Q(t)-(P(t)-P_{0})dV>0.
\]
The identification $d_{\text{i}}W(t)=-$ $d_{\text{e}}Q(t)$ and the calculation
of $A(t)d\xi(t)$ cannot be done in the traditional formulation of the first law.

\section{Clausius Equality \label{Sec_Clausius_Equality}}

It follows from Eq. (\ref{Def_dQ}) that $dQ(t)/T(t)$ is nothing but the
\emph{exact differential }$dS(t)$ so that
\begin{equation}
\oint dQ(t)/T(t)\equiv0\ \ \text{for}\ \text{any cyclic process.}
\label{Clausius_Equality_0}%
\end{equation}
We will call it the \emph{Clausius equality}. The equality should not be
interpreted as the absence of irreversibility; see Eq.
(\ref{Clausius_Inequality_00}). It is only because of the use of the
generalized heat $dQ(t)$ in place of $d_{\text{e}}Q(t$\ that the Clausius
inequality has become an equality. Using $d_{\text{i}}S(t)\equiv
dS(t)-d_{\text{e}}S(t)$, we obtain the original Clausius inequality for a
cyclic process taking time $\tau$:
\begin{equation}
N(t,\tau)\equiv%
{\textstyle\oint}
d_{\text{i}}S(t)=-%
{\textstyle\oint}
d_{\text{e}}Q(t)/T_{0}\geq0, \label{Irreversible_entropy_Cycle}%
\end{equation}
which is the second law for a cyclic process and represents the irreversible
entropy generated in a cycle; see Eq. (\ref{Clausius_Inequality_00}). The
quantity $N(t)$ is the \emph{uncompensated transformation} of Clausius
\cite{Prigogine} that, as we have just discovered, is directly related to
$d_{\text{i}}S(t)$ \cite{Eu}; in contrast, $N_{0}(t,\tau)$
\begin{equation}
N_{0}(t,\tau)\equiv%
{\textstyle\oint}
d_{\text{i}}Q(t)/T(t) \label{Irreversible_entropy_Cycle-0}%
\end{equation}
is determined by the \emph{uncompensated heat} $d_{\text{i}}Q(t)$ and
represents a different quantity. In terms of the two heats, we have
\begin{equation}%
{\textstyle\oint}
d_{\text{i}}Q(t)/T(t)=-%
{\textstyle\oint}
d_{\text{e}}Q(t)/T(t)\geq0, \label{Clausius_Inequality}%
\end{equation}
which results in a new Clausius inequality\emph{ }%
\begin{equation}%
{\textstyle\oint}
d_{\text{e}}Q(t)/T(t)\leq0; \label{Clausius_Inequality_0}%
\end{equation}
compare with the original Clausius inequality in Eq.
(\ref{Clausius_Inequality_00}). The contribution $N_{0}(t)$ in Eq.
(\ref{Clausius_Inequality}) can be thought of as a new uncompensated
transformation, different from $N(t)$ \cite{Prigogine}. Our formulation has
allowed us to identify the two transformations%
\[
N(t,\tau)\equiv%
{\textstyle\oint}
d_{\text{i}}S(t)\equiv%
{\textstyle\oint}
d_{\text{i}}Q(t)/T_{0}-%
{\textstyle\oint}
T(t)dS(t)/T_{0},~N_{0}(t,\tau)\equiv%
{\textstyle\oint}
d_{\text{i}}Q(t)/T(t)\equiv%
{\textstyle\oint}
d_{\text{i}}W(t)/T(t),
\]
which provides a way of computing them using nonequilibrium equations of state
of $\Sigma$ and $\widetilde{\Sigma}$. \ \ 

Other quantities also appear as equalities. Let us consider work, which
appears as $dW(t)=-\left[  dE(t)-T(t)dS(t)\right]  $. For a not too small a
change, we have
\begin{equation}
\Delta W(t)=-\Delta\widehat{F}(t)\equiv\Delta\left[  E(t)-T(t)S(t)\right]  .
\label{Work_FreeEnergy_Equality}%
\end{equation}
The quantity $\widehat{F}(t)$ in the brackets should not be confused with the
Helmholtz free energy $F(t)\equiv E(t)-T_{0}S(t)$. The above equalities should
be contrasted with the traditional inequalities $d_{\text{e}}W(t)\leq-dF(t)$
or $\Delta_{\text{e}}W(t)\leq-\Delta F(t)$.

\section{Applications\label{Sect_Applications}}

We now consider two simple applications of the new approach.

\subsection{\label{Subsection_Free_Expansion}Free Expansion}

Consider the example of free expansion in a set up in which gas in one chamber
is separated from another empty chamber of identical volume $V^{0}$ by a
partition; the latter is held stationary by some mechanism. Both chambers form
an isolated system so that not only $d_{\text{e}}Q\equiv0$ but also that the
expansion occurs at \emph{constant} energy $E^{0}$. Therefore, $d_{\text{e}%
}W\equiv0$. Therefore, the traditional formulation of the first law is of no
use in obtaining any useful information about the nature of irreversibility.
The initial pressure and temperature of the gas are $P^{0}$ and $T^{0}$. As
soon as the mechanism is removed, the partition becomes free to move and the
gas expands into the other chamber against a zero pressure. The pressure
difference $\Delta P(t)=P(t)$, where $P(t)$ is the instantaneous pressure of
the expanding gas occupying a volume $V(t)$; the latter can be recorded as a
function of time. Using the recorded $V(t)$, the temperature $T(t)$ and $P(t)$
of the gas are then obtained from the instantaneous equation of state of the
gas and the condition $E\left(  t\right)  =E^{0}$. The resulting irreversible
work is $d_{\text{i}}W(t)=P(t)dV(t)\equiv d_{\text{i}}Q(t)=T(t)d_{\text{i}%
}S(t)$, where we have used Eq. (\ref{Irreversible_Heat_02}) with $d_{\text{e}%
}S(t)=0$. We thus find%

\begin{equation}
\Delta_{\text{i}}S(t)=\int_{V^{0}}^{V(t)}P(t)dV(t)/T(t)=\int_{S^{0}}%
^{S(t)}\left.  dS\right\vert _{E}(t)=\left.  S\right\vert _{E}(t)-\left.
S\right\vert _{E}^{0},\nonumber
\end{equation}
where the suffix $E$ is used to represent that $E$ is constant in the process,
and where $\left.  S\right\vert _{E}^{0}$\ represents the initial entropy.
Neither the initial state nor the state at $t$ is required to be an
equilibrium state in the above calculation, which follows the recorded
non-equilibrium path given by $V(t)$ at constant $E$.

Assuming the initial and final states to be equilibrium states, we have
\[
\Delta_{\text{i}}S(t\rightarrow\infty)=S_{\text{eq}}(E^{0},2V^{0}%
,N)-S_{\text{eq}}(E^{0},V^{0},N).
\]
For an ideal gas, we evidently obtain $\Delta_{\text{i}}S(t\rightarrow
\infty)=N\ln2$ between the two equilibrium states, as expected.

The work done by and the heat generated within the system are given by%
\[
\Delta W(t)=\Delta_{\text{i}}Q(t)=\Delta_{\text{i}}W(t)=\int_{V^{0}}%
^{V(t)}P(t)dV(t),
\]
and can be calculated for the expansion profile given by $V(t)$.

\subsection{Relative Motion and the Resulting
Friction\label{Subsecton_Relative_Motion}}

Let us consider a different situation, in which the system is also moving with
respect to the medium with some velocity $\mathbf{V}(t)$ given by
\cite{Landau,Gujrati-II}
\[
\beta(t)\mathbf{V}(t)=-\left(  \partial S(t)/\partial\mathbf{P}(t)\right)
_{E(t),V(t)}%
\]
in which $\mathbf{P}(t)$ and $E(t)$ are the momentum and internal energy of
the system in the lab frame (in which $\Sigma_{0}$, and to a very good
approximation $\widetilde{\Sigma}$ are at rest). Such a motion arises during
sudden mixing of fluids or in a Couette flow, and eventually stops due to
deceleration caused by generated frictional forces as a result of the relative
motion; we do not consider any external force causing this motion. The
irreversible work done by the system against friction is%
\[
d_{\text{i}}^{\text{fr}}W(t)=-\mathbf{V}(t)\text{\textperiodcentered
}\mathbf{F}_{\text{fr}}dt>0,
\]
where $\mathbf{F}_{\text{fr}}$is the resulting frictional force opposing the
motion and results in $d\mathbf{P}(t)\equiv\mathbf{F}_{\text{fr}}dt$. It is
clear that $d_{\text{i}}^{\text{fr}}W(t)$ vanishes as the motion ceases. This
dissipative work is in addition to the irreversible work caused by any
pressure difference and any affinity as before:%
\begin{equation}
d_{\text{i}}W(t)=(P(t)-P_{0})dV(t)+A(t)d\xi(t)-\mathbf{V}%
(t)\text{\textperiodcentered}d\mathbf{P}(t).
\label{Irreversible_Friction_Work}%
\end{equation}
From Eq. (\ref{Irreversible_Heat_0}) and Eq.
(\ref{Irreversible_Heat_Work_equality}), we then find
\begin{equation}
T(t)d_{\text{i}}S(t)=[(P(t)-P_{0})dV(t)+A(t)d\xi(t)-\mathbf{V}%
(t)(t)\text{\textperiodcentered}d\mathbf{P}(t)-(T(t)-T_{0})d_{\text{e}%
}S(t).\nonumber
\end{equation}
\qquad

The same conclusion can also be obtained by considering the Gibbs fundamental
relation for this case. Using the above definition of the velocity, it is easy
to see that
\begin{equation}
dE(t)=T(t)dS(t)+\mathbf{V}(t)\text{\textperiodcentered}d\mathbf{P}%
(t)-P(t)dV(t)-A(t)d\xi(t); \label{Friction}%
\end{equation}
see Ref. \cite{Gujrati-II}. The last three terms, being isentropic
contribution to $dE(t)$\ represent $dW(t)$, while the first term represents
$dQ(t)$. We now follow Ref. \cite{Gujrati-II} and rewrite%
\begin{equation}
dW(t)=P_{0}dV(t)-\mathbf{V}_{0}\text{\textperiodcentered}d\mathbf{P}%
(t)+(P(t)-P_{0})dV(t)-(\mathbf{V}(t)-\mathbf{V}_{0})\text{\textperiodcentered
}d\mathbf{P}(t), \label{General_Friction_Form}%
\end{equation}
with the first two terms representing $d_{\text{e}}W(t)$, and the last two
terms $d_{\text{i}}W(t)$. Here, $\mathbf{V}_{0}$ represents the equilibrium
pressure velocity. However, the equilibrium value of $\mathbf{V}(t)$ is
$\mathbf{V}_{0}=0$, which then yields $d_{\text{i}}W(t)$ given in Eq.
(\ref{Irreversible_Friction_Work}). The exchange work $d_{\text{e}}%
W(t)=P_{0}dV(t)$ remains unchanged and says nothing about the presence of
internal friction terms that appears in our approach.

\section{Inclusion of other state variables\label{Sect_Other_State_Variables}}

Let us now extend the discussion to include other extensive quantities such as
the flow of matter, the electric interactions, chemical reactions, etc. For
specificity, we focus on chemical reactions, which we assume to be described
by a single extent of reaction $\xi(t)$. The corresponding affinity for the
system is given by $A(t)$, while that for the medium is given by $A_{0}=0$. We
assume another observable $X$ such as the number of solvent in a binary
mixture. The corresponding chemical potential is $\mu(t)$ for the system and
$\mu_{0}$\ for the medium. The work is now
\begin{equation}
dW(t)=P(t)dV(t)-\mu(t)dX+A(t)d\xi(t). \label{General_work}%
\end{equation}
The Gibbs fundamental relation for $\Sigma$ is given by
\begin{equation}
dE(t)=T(t)dS(t)-P(t)dV(t)+\mu(t)dX-A(t)d\xi(t), \label{General-Energy}%
\end{equation}
while the first law for it takes the form $dE(t)=dQ(t)-dW(t)$. Rewriting
$dQ(t)=dE(t)+P(t)dV(t)-\mu(t)dX+A(t)d\xi(t)$ as $dQ(t)=dE(t)+P_{0}%
dV(t)-\mu_{0}dX+(P(t)-P_{0})dV(t)-(\mu(t)-\mu_{0})dX+A(t)d\xi(t),$ we can
identify \cite{Gujrati-II} $d_{\text{e}}Q(t)$ with the first three terms%
\[
d_{\text{e}}Q(t)=dE(t)+P_{0}dV(t)-\mu_{0}dX.
\]
The last two terms represent $d_{\text{e}}W(t)$:%
\[
d_{\text{e}}W(t)=P_{0}dV(t)-\mu_{0}dX,
\]
which shows that the second term will appear in the traditional formulation of
the first law, but not the associated thermodynamic force $\mu_{0}-\mu(t)$;
see below. Similarly, $d_{\text{i}}Q(t)(t)=d_{\text{i}}W(t)$ represents the
dissipation given by%
\begin{equation}
d_{\text{i}}W(t)=(P(t)-P_{0})dV(t)-(\mu(t)-\mu_{0})dX+A(t)d\xi(t);
\label{Irreversible-Work_Extended}%
\end{equation}
see Eq. (\ref{Irreversible_Heat_Work_equality})$.$ According to the second
law, each contribution in $d_{\text{i}}W(t)$ is non-negative. The irreversible
entropy generation is given by%
\begin{equation}
T_{0}d_{\text{i}}S(t)=[T_{0}-T(t)]dS(t)+[P(t)-P_{0}]dV(t)+A(t)d\xi(t)+(\mu
_{0}-\mu(t))dX(t). \label{Irreversible-Entropy_Extended}%
\end{equation}
The extension to arbitrary number of observables and internal variables is
trivial \cite{Gujrati-II}.

\section{A Closed System Performing Work on an External
Object\label{Sec-Closed-System}}

We now consider our isolated system to consist of the previous medium
$\widetilde{\Sigma}$($T_{0},P_{0},A_{0})$, the system $\Sigma\lbrack
T(t),P(t),A(t),\mu(t)]$ and an external object $\Sigma_{\text{ext}}%
[T_{0},P_{0},A_{0},\mu_{\text{ext}}(t)];\ $the latter is completely
disconnected from the medium but can only exchange the type of work with the
system involving the extensive variable $X(t)$ considered in the previous
section. However, there are some important differences. First, we do not limit
the external object to be extremely large compared to the system. Accordingly,
its chemical potential $\mu_{\text{ext}}(t)$ can change in time. Just for
convenience, we assign to it $T_{0},P_{0},A_{0}$ of the medium, even though it
is isolated from the latter and cannot exchange heat and volume and internal
variable work with the system and the medium. It is a classic problem that has
been extensively studied in which the external object is thermally insulated
form the system and the medium; see for example Ref. \cite[see Sec. 20
there]{Landau}. It is also studied by the stochastic trajectory approach; see
for example, Refs. \cite{Jarzynski,Seifert}. Here, we will closely follow
Landau and Lifshitz \cite[see Sec. 20 there]{Landau}. As the external object
only performs work, its entropy $S_{\text{ext}}$ must not change. We will
denote the totality of $\Sigma$\ and $\Sigma_{\text{ext}}\ $by $\overline
{\Sigma}$ and all its quantities by an additional bar. For the isolated system
$\overline{\Sigma}$, we have $\Delta\overline{S}_{0}(t)\equiv\Delta_{\text{i}%
}\overline{S}_{0}(t)=\Delta\left[  S(t)+S_{\text{ext}}(t)+\widetilde
{S}(t)\right]  =\Delta\left[  S(t)+\widetilde{S}(t)\right]  \geq0$ during a
process involving (not necessarily small) change within the isolated system
$\overline{\Sigma}_{0}$. As usual, all exchange quantities between various
pairs of the three parts cancel out for $\overline{\Sigma}_{0}$; only
irreversible quantities survive such as $\Delta_{\text{i}}\overline{S}_{0}%
(t)$, which in our approach is associated with the entropy generation within
$\overline{\Sigma}$. Accordingly, we will use the notation $\Delta_{\text{i}%
}\overline{S}(t)$ for $\Delta_{\text{i}}\overline{S}_{0}(t)$ below and write
\begin{equation}
\Delta_{\text{i}}\overline{S}(t)=\Delta_{\text{i}}S^{(\text{Q})}%
(t)+\Delta_{\text{i}}S^{(\text{V})}(t)+\Delta_{\text{i}}S^{(\xi)}%
(t)+\Delta_{\text{i}}\overline{S}^{(\text{X})}(t)\geq0.
\label{Irreversible_entropy_Closed}%
\end{equation}
The first three terms on the right in the first equation denote the cumulative
irreversible entropy changes due to exchanges between the system and the
medium and the last term is the cumulative irreversible entropy change due to
the exchange between the system and the external object. The work done
$\Delta_{\text{e}}R(t)$ by the external object on the system is%
\begin{equation}
\Delta_{\text{e}}R(t)=\Delta\left[  E(t)-T_{0}S(t)+P_{0}V(t)\right]
+T_{0}\Delta_{\text{i}}\overline{S}(t). \label{Work_Closed-0}%
\end{equation}
In terms of $\Delta_{\text{e}}W(t)=-\Delta_{\text{e}}R(t)$, we have the
following inequality%
\begin{equation}
\Delta_{\text{e}}W(t)\leq\Delta G(T_{0},P_{0},t)\equiv\Delta\left[
E(t)-T_{0}S(t)+P_{0}V(t)\right]  \label{Work_Closed-System}%
\end{equation}
where $\Delta G(T_{0},P_{0},t)$ is the change in the Gibbs free energy of the
system. It should be stressed that the above inequality does not assume the
establishment of internal equilibrium in the system. If we are considering
spontaneous changes occurring within the system in the medium $\widetilde
{\Sigma}$($T_{0},P_{0},A_{0})$, we must set the external work $d_{\text{e}%
}R=0$ above by setting $dX(t)=0$. We then find%
\begin{equation}
dG(T_{0},P_{0},t)=-T_{0}d_{\text{i}}S(t)\leq0
\label{Gibbs_-Free-Energy-Behavior}%
\end{equation}
during spontaneous relaxation. Thus, assigning the irreversibility to the
system leads to the expected behavior of the time-dependent Gibbs free energy
of the system.

The corresponding work done by the system with respect to the variable $X(t)$
is $dW_{\text{X}}(t)=-\mu(t)dX(t)$, while the work done by the external object
is $d_{\text{e}}R(t)=$ $\mu_{\text{ext}}dX(t)$, with the result $d_{\text{i}%
}W_{\text{X}}(t)=\left[  \mu_{\text{ext}}(t)-\mu(t)\right]  \geq0$. We then
find%
\begin{equation}
T_{0}d_{\text{i}}\overline{S}(t)=[T_{0}-T(t)]dS(t)+[P(t)-P_{0}]dV(t)+A(t)d\xi
(t)+\left[  \mu_{\text{ext}}(t)-\mu(t)\right]  dX(t)\geq0,
\label{Irreversible_entropy_Closed-0}%
\end{equation}
in which the first three terms refer to irreversible entropy generation within
the system and the last term refers to the generation within $\overline
{\Sigma}$. Each term on the right must be non-negative; compare with Eq.
(\ref{Irreversible-entropy-contributions}). Comparison with Eq.
(\ref{Irreversible-Work_Extended}) \ shows that the last three terms above
represent $d_{\text{i}}W(t)$, except that $\mu_{0}$ is replaced by
$\mu_{\text{ext}}$. The additional thermodynamic force $\mu_{\text{ext}%
}(t)-\mu(t)$ vanishes when there is equilibrium established between the system
and the external object; let the equilibrium value be given by $\overline{\mu
}_{0}$. The new force generates the irreversible entropy $d_{\text{i}%
}\overline{S}^{\text{(X)}}(t)$ due to $X$. Following the arguments given in
Ref. \cite{Gujrati-II}, see Sec. XI\textbf{ }there, for the case of thermal
equilibration and which can be adapted here verbatim, we find that the
irreversible entropies for $\Sigma$\ and $\Sigma_{\text{ext}}\ $are given,
respectively, by
\begin{equation}
T_{0}d_{\text{i}}S^{(\text{X)}}(t)=\left[  \overline{\mu}_{0}-\mu(t)\right]
dX(t),T_{0}d_{\text{i}}S_{\text{ext}}^{\text{(X)}}(t)=\left[  \overline{\mu
}_{0}-\mu_{\text{ext}}(t)\right]  dX_{\text{ext}}(t), \label{Irreversible_S_X}%
\end{equation}
with $dX(t)+dX_{\text{ext}}(t)=0$. As the entropy of the external object
remains constant, there must be a corresponding change $d_{\text{e}%
}S_{\text{ext}}^{\text{(X)}}(t)=-d_{\text{i}}S_{\text{ext}}^{\text{(X)}}(t)$
as it performs work, which requires $d_{\text{e}}p_{k\text{,ext}%
}(t)=-d_{\text{i}}p_{k\text{,ext}}(t)$ for the $k$-th microstate of the object.

If the object is extremely large compared to the system so that the coupling
with the system will not appreciably affect $\mu_{\text{ext}}$, the latter can
be taken as a constant equal to $\mu_{0}=\overline{\mu}_{0}$. In this
situation, the entire contribution $d_{\text{i}}\overline{S}^{\text{(X)}}(t)$
must be treated as $d_{\text{i}}S^{\text{(X)}}(t)$\ assigned to the system, as
there is no force in the object to generate irreversibility ($d_{\text{i}%
}S_{\text{ext}}(t)=0$).

The difference between our approach and the stochastic trajectory approach
should be pointed out. In the latter approach, $T=T_{0}$ and $P=P_{0}$ for the
system. Also, as no internal variable is considered, we will set $A=A_{0}=0$
above. Thus, the only irreversibility is due to the possibility of work by the
external object. The entropy $S(t)$ is given by the Gibbs formulation; see Eq.
(\ref{Entropy_Energy}), and its changes $dS(t)$ includes the irreversible
entropy changes $d_{\text{i}}S(t)=d_{\text{i}}S^{(\text{X)}}(t)$. This is not
the case in the stochastic trajectory approach in which the irreversible
entropy $\Delta_{\text{i}}\overline{S}^{(\text{X})}(t)$ is assigned to the
medium \cite{note-2}.

\section{Summary and Discussion}

\subsection{Summary}

The traditional formulation of the first law has two basic deficiencies. The
first one is that even though the first law is about the change in the energy
of the system, it contains the fields of the medium and not of the system. The
second, and more dramatic conceptually, deficiency is that it remains
satisfied even if the process violated the second law. This is why both laws
are needed in nonequilibrium thermodynamics. Despite the concern expressed by
Kestin \cite[Sect. 5.12]{Kestin}, we have shown that the definition of the
generalized heat $dQ(t)$, which includes its irreversible component
$d_{\text{i}}Q(t)$, follows \emph{uniquely} from the unique choice of $dW(t)$
resulting from the second law, see Eq. (\ref{Work_Forms}), and also from the
statistical approach, see Eq. (\ref{dE_dW}). It general, we have the identity
$d_{\text{i}}W(t)\equiv$ $d_{\text{i}}Q(t)$ between their irreversible
components; see Eq. (\ref{Irreversible_Heat_Work_equality}). These results do
not require the system to be in internal equilibrium. The situation becomes
very illuminating when the system is in internal equilibrium. In this case,
the generalized heat formulation gives rise to not only the simple
relationship $dQ(t)=T(t)dS(t)$, but also the extension of the adiabatic
theorem for nonequilibrium states; see Theorem \ref{Adiabatic Theorem}.
Another remarkable consequence is that in terms of generalized $dQ(t)$,\ the
Clausius equality (\ref{Clausius_Equality_0}) is always maintained, in
contrast to the inequality (\ref{Clausius_Inequality}) in the traditional
approach. Our generalized formulation brings about a symmetry between the
system and its surrounding medium, see Eq. (\ref{Work_Forms}), which is absent
in the traditional approach using external fields. The first law becomes
identical to the second law so that we only deal with equalities and a single
law; the equalities are easier to deal with than the inequalities that result
in the traditional approach; see for example, Eq.
(\ref{Work_FreeEnergy_Equality}) or Eq. (\ref{Gibbs_Fundamental_Relations}).

The introduction of microstate probabilities and the partition of $dp_{i}(t)$
into $d_{\text{e}}p_{i}(t)$ and $d_{\text{i}}p_{i}(t)$ provide a very
convenient and useful statistical representation of exchange and irreversible
forms of heat and entropy and of other quantities, and which allows us to draw
some useful conclusions. For example, it follows directly from Theorem
(\ref{Adiabatic Theorem}) that pure work cannot change the entropy. This was
used to argue that $S_{\text{ext}}$\ must be a constant in Sec.
\ref{Sec-Closed-System}. We have found that the microstate probability
$p_{i}(t)$ is very different from its equilibrium analog $p_{i\text{,eq}}.$
Several examples have been given to show the usefulness of the new definition
of heat and work. Below we list some of the benefits of the new formulation.

\subsection{Benefits of the Generalized Formulation\label{Sect_Benefits}}

\begin{enumerate}
\item It is a new way to write the first law, which shows that the first and
the second laws are \emph{no longer} independent, a property now shared by all
irreversible processes and not just by reversible processes. This conclusion
must be contrasted with the well-known result found in all text books and the
literature that the two laws are independent for irreversible processes.

\item Work and heat can be uniquely distinguished so that each has a clear
physical significance, but no such clear identification is possible in the
traditional formulation of the first law; see Theorems
(\ref{Theorem_Heat_Work}) and (\ref{Theorem_Traditional_Heat_Work}).

\item Work and heat no longer are restricted to flow through the boundary.
Thus, the new formulation can easily accommodate interactions described by
fields (electromagnetic, gravitational, etc.) \cite{Kestin}.

\item The entire thermodynamics and stability considerations for any body can
be expressed in terms of the variables associated with the body alone at each
instant. We can use the instantaneous equations of state of the body alone for
thermodynamic computation. Indeed, it may be the only way to study a
non-equilibrium system in some cases as is clearly shown by the example of
free expansion in Sect. \ref{Sect_Applications}. Here, the traditional
formulation of the first law is of no use to determine $d_{\text{i}}W(t)$ and
$d_{\text{i}}S(t)$\ as is well known.

\item We deal with equalities and not inequalities. In addition, we only deal
with $dS(t)$, which as a dependent variable is expressed in terms of state
variables $\mathbf{Z}(t)$; there is no need to consider its parts
$d_{\text{e}}S(t)$ and \ \ $d_{\text{i}}S(t)$ separately, which depend on the
medium; see (3) in Sect. \ref{Subsection_Limitations}. The generalized heat
$dQ(t)$\ and work $dW(t)$ differ from $d_{\text{e}}Q(t)$ and $d_{\text{e}%
}W(t)$, respectively, by the same contribution $d_{\text{i}}Q(t)\equiv
d_{\text{i}}W(t)\equiv d_{\text{i}}W_{0}(t)$.

\item The use of the Gibbs fundamental relation makes it almost trivial to
identify the irreversible contributions. Thus, the new formulation does not
have the limitation inherent in the traditional formulation; see (7) in Sect.
\ref{Subsection_Limitations}. The determination of $d_{\text{i}}W_{0}(t)$ is
straightforward by measuring $P(t)$ and $P_{0}$. Adding this to $d_{\text{e}%
}Q(t)$ then allows us to determine $dQ(t)$.

\item The most useful aspect of the generalized approach is that the work
$dW(t)$ does not depend on the amount and nature of dissipation going on
during a process. It is determined by the instantaneous pressure of the
system, which itself is a derivative of its instantaneous entropy, a state
variable. We have already seen this earlier in Eq. (\ref{Dissipation_work}).
Whether friction between the piston and the walls is present or not,
$d_{\text{i}}W(t)$ is always given by $\Delta P(t)dV(t)$ so that $dW(t)=$
$P(t)dV(t)$.

\item Once $dQ(t)$ is known, $dS(t)\equiv dQ(t)/T(t)$ is also known or vice
versa as if we are dealing with a system in equilibrium, even though we are
dealing with an out-of-equilibrium system in internal equilibrium. In the
latter situation, the analysis is considerably simplified as we are always
dealing with the instantaneous properties such as the equation of state of the
system even though the system is out of equilibrium. We do not need to, or may
not, know these properties for $\widetilde{\Sigma}$.
\end{enumerate}

The new formulation has many other desirable properties. The generalized heat
$dQ(t)$\ and work $dW(t)$ differ from $d_{\text{e}}Q(t)$ and $d_{\text{e}%
}W(t)$, respectively, by the same contribution $d_{\text{i}}Q(t)\equiv
d_{\text{i}}W(t)$. The determination of $d_{\text{i}}W(t)$ is straight forward
in terms of thermodynamic forces $P(t)-P_{0},A(t)$, etc. Adding this to
$d_{\text{e}}Q(t)$ then allows us to determine $dQ(t)$ from which $dS(t)$ can
be determined.

\end{document}